\DeclareMathOperator{\lcp}{lcp}
\DeclareMathOperator{\lcs}{lcs}
\newcommand{\suf}{\mathit{Suf}}
\renewcommand{\epsilon}{\varepsilon} 
\newcommand{\Oh}{\mathcal{O}}
\newcommand{\floor}[1]{\left\lfloor #1 \right\rfloor}
\newtheorem{fact}[theorem]{Fact}
\newtheorem{observation}[theorem]{Observation}
\newcommand{\eps}{\varepsilon}
\newcommand{\nlarger}{\mathit{NotLarger}}
\begin{document}
\title{Substring Suffix Selection}
\titlerunning{Substring Suffix Selection}

\author{
Maxim Babenko\inst{1}
\and
Pawe{\l} Gawrychowski\inst{2}
\and
Tomasz Kociumaka\inst{3}
\and
Tatiana Starikovskaya\inst{1}
}

\institute{
Higher School of Economics, Moscow, Russia, \email{maxim.babenko@gmail.com,tat.starikovskaya@gmail.com}
\and
Max-Planck-Institut f\"{u}r Informatik, Saarbr\"{u}cken, Germany, \email{gawry@cs.uni.wroc.pl}
\and
Institute of Informatics, University of Warsaw, Warsaw, Poland, \email{kociumaka@mimuw.edu.pl}
}

%\affil[1]{Higher School of Economics\\Myasnitskaya Str., 18, Moscow, Russia\\ \texttt{maxim.babenko@gmail.com,tat.starikovskaya@gmail.com}}
%\affil[2]{Max-Planck-Institut f\"{u}r Informatik\\Campus E14, 66123, Saarbr\"{u}cken, Germany\\ \texttt{gawry@cs.uni.wroc.pl}}
%\affil[3]{Institute of Informatics, University of Warsaw\\ ul.~Banacha~2, 02-097, Warsaw, Poland\\ \texttt{kociumaka@mimuw.edu.pl}}

%\authorrunning{M. Babenko, P. Gawrychowski, T. Kociumaka, and T. Starikovskaya}

%\Copyright{Maxim Babenko, Paweł Gawrychowski, Tomasz Kociumaka, and Tatiana Starikovskaya}

%\subjclass{F.2.2 Nonnumerical Algorithms and Problems }
%\keywords{data structures, lexicographic order, substrings, suffix}

%Editor-only macros (do not touch as author)%%%%%%%%%%%%%%%%%%%%%%%%%%%%%%%%%%%
%\serieslogo{}%please provide filename (without suffix)
%\volumeinfo%(easychair interface)
%  {Billy Editor, Bill Editors}% editors
%  {2}% number of editors: 1, 2, ....
%  {Conference title on which this volume is based on}% event
%  {1}% volume
%  {1}% issue
%  {1}% starting page number
%\EventShortName{}
%\DOI{10.4230/LIPIcs.xxx.yyy.p}% to be completed by the volume editor
%%%%%%%%%%%%%%%%%%%%%%%%%%%%%%%%%%%%%%%%%%%%%%%%%%%%%%%%%

\date{\empty}
\maketitle

\begin{abstract}
We study the following \emph{substring suffix selection} problem: given a substring of a string $T$ of length $n$, compute its $k$-th lexicographically smallest suffix. This a natural generalization of the well-known question of computing the maximal suffix of a string,
which is a basic ingredient in many other problems.

We first revisit two special cases of the problem, introduced by Babenko, Kolesnichenko and Starikovskaya~[CPM'13], in which we are asked to compute the minimal non-empty and the maximal suffixes of a substring. For the maximal suffixes problem, we give a linear-space structure with
$O(1)$ query time and linear preprocessing time, i.e., we manage to achieve optimal construction and optimal query time simultaneously.
For the minimal suffix problem, we give a linear-space data structure with $O(\tau)$ query time and $O(n \log n / \tau)$ preprocessing time, where $1 \le \tau \le \log n$ is a~parameter of the data structure. As a
sample application, we show that this data structure can be used to compute the Lyndon decomposition of any substring of $T$ in $O(k \tau)$ time, where $k$ is the number of distinct factors in the decomposition.

Finally, we move to the general case of the substring suffix selection problem, where using any combinatorial properties seems more difficult.
Nevertheless, we develop a linear-space data structure with $O(\log^{2+\varepsilon} n)$ query time.
\end{abstract}

\section{Introduction}
    \label{sec:intro}
    Computing the $k$-th lexicographically smallest suffix of a~string is both an interesting problem on its own, and a crucial ingredient
in solutions to many other problems. As an example of the former, a well-known result by Duval~\cite{Duval} is that the maximal suffix of a~string can be found in linear time
and constant additional space. As an example of the latter, the famous constant space pattern matching algorithm of Crochemore-Perrin
is based on the so-called critical factorizations, which can be found by looking at maximal suffixes~\cite{AlgorithmsOnStrings}.
 In the more general version, a straightforward way to compute the $k$-th suffix of a~string is to
construct its suffix array, which results in a linear time and space solution, assuming that we can sort the letters in linear time.
Surprisingly, one can achieve linear time complexity even without such assumption, as shown by Franceschini and Muthukrishnan~\cite{SufSelect}.

We consider a natural generalization of the question of locating the $k$-th suffix of a~string. We assume that the string we are asked to compute the
$k$-th suffix for is actually a substring of a longer text $T[1..n]$ given in advance. Information about $T$ can be preprocessed and then used 
to significantly speed up the computation of the desired suffixes of a query string. This seems to be a very natural setting whenever we are thinking
about storing large collections of text data. Other problems studied in such version include the substring-restricted pattern matching, where we are 
asked to return occurrences of a given word in some specified interval~\cite{SRR}, the factor periodicity problem, where we are asked to compute the 
period of a given substring~\cite{FactorPeriodicity2012}, and substring compression, where the goal is to output compressed representation of
given substring~\cite{SubstringCompress,GenSubstringCompress}.

We start with two special cases of the problem, namely, computing the minimal non-empty and the maximal suffixes of a~substring of $T$. These two 
problems were introduced in~\cite{Minmaxsuf}. The authors proposed two linear-space data structures for~$T$. Using the first data structure, one can 
compute the minimal suffix of any substring of $T$ in $\Oh(\log^{1+\eps} n)$ time. The second data structure allows to compute the maximal suffix of a 
substring of $T$ in $\Oh(\log n)$ time. Here we improve upon both of these results. First, we describe a series of linear-space data structures that allow, 
for any $1 \le \tau \le \log n$, to compute the minimal suffix of a substring of $T$ in $\Oh(\tau)$ time. Construction time is $\Oh(n \log n / \tau)$. 
Secondly, we describe a linear-space data structure for the maximal suffix problem with $\Oh(1)$ query time. The data structure can be constructed in linear time. Computing the maximal or the minimal suffix is a fundamental tool used in more complex algorithms, so our results can hopefully be used
to efficiently solve also other problems in such setting, i.e., when we are working with substrings of some long text $T$. As a~particular application,
we show how to compute the Lyndon decomposition~\cite{chen1958free} of a~substring of $T$ in $\Oh(k \tau)$ time, where $k$ is the number of
distinct factors in the decomposition. 

We then proceed to the general case of the problem, which is much more interesting from the practical point of view. It is also substantially
more difficult, mostly because the $k$-th suffix of a substring does not enjoy the combinatorial properties the minimal and the maximal suffixes have. Nevertheless, we are able to propose a linear-space data structure with $\Oh(\log^{2+\varepsilon} n)$ query time for the general case.

Our data structures are designed for the standard word-RAM model, see~\cite{AHU-74} for a definition.
We assume that letters in $T$ can be sorted in $O(n)$ time.

\section{Preliminaries}
   \label{sec:prelim}
   We start by introducing some standard notation and definitions. 
Let $\Sigma$ be a finite ordered non-empty set (called an \emph{alphabet}).
The elements of~$\Sigma$ are \emph{letters}.

A finite ordered sequence of letters (possibly empty) is called a \emph{string}. 
Letters in a~string are numbered starting from~1, that is, a string $T$ of \emph{length} $k$ consists of letters $T[1], T[2], \ldots, T[k]$. The length~$k$ of $T$ is denoted by~$|T|$. 
For $i \le j$, $T[i..j]$ denotes the \emph{substring} of $T$ from position $i$ to position $j$ (inclusive). 
If $i > j$, $T[i..j]$ is defined to be the empty string. 
Also, if $i = 1$ or $j = |T|$ then we omit these indices and we write just $T[..j]$ and $T[i..]$. 
Substring $T[..j]$ is called a \emph{prefix} of $T$, and $T[i..]$ is called a \emph{suffix} of $T$. 
A \emph{border} of a string~$T$ is a string that is both a prefix and a suffix of~$T$ but differs from $T$. 

A string $T$ is called \emph{periodic with period $\rho$} if $T = \rho^s \rho'$ for an integer $s \ge 1$ and a (possibly empty) proper prefix $\rho'$ of $\rho$. 
When this leads to no confusion, the length of $\rho$ will also be called a period of $T$.
Borders and periods are dual notions; namely, if $T$ has period~$\rho$ then it has a border of length $|T| - |\rho|$, and vice versa (see, e.g.,~\cite{TextAlgorithms}).

Letters are treated as integers in a range $\{1, \ldots, |\Sigma|\}$; a pair of letters can be compared in $\Oh(1)$ time. 
This \emph{lexicographic} order over $\Sigma$ is linear and can be extended in a standard way to the set of strings in $\Sigma$. 
Namely, $T_1 \prec T_2$ if either (i) $T_1$ is a prefix of $T_2$; or (ii) there exists $0 \le i < \min(|T_1|, |T_2|)$ such that $T_1[..i] = T_2[..i]$, and $T_1[i+1] \prec T_2[i+1]$.

\section{Suffix Array and Related Data Structures}
Consider a fixed string $T$.
For $1\le i<j\le |T|$ let $\suf[i,j]$ denote $\{T[i..],\ldots,T[j..]\}$.
The set $\suf[1,|T|]$ of all non-empty suffixes of $T$ is also denoted as $\suf$.
The \emph{suffix array} $SA$ of a string~$T$ is a permutation of $\{1, \ldots, |T|\}$ defining the lexicographic order on $\suf$. 
More precisely, $SA[r] = i$ if the rank of $T[i..]$ in the lexicographic order on $\suf$ is $r$. 
The inverse permutation is denoted by $ISA$; it reduces lexicographic comparison of suffixes $T[i..]$ and $T[j..]$ to integer comparison of their ranks $ISA[i]$ and $ISA[j]$. 
For a string $T$, both $SA$ and $ISA$ occupy linear space and can be constructed in linear time (see~\cite{PST-07} for a survey). 
For strings $S,T$ we denote the length of their longest common prefix by $\lcp(S,T)$,
and of their longest common suffix by $\lcs(S,T)$.

While $SA$ and its reverse are useful themselves, equipped with additional data structures they are even more powerful.
We use several classic applications listed below.
\begin{lemma}\label{lem:all}
A string $T$ of length $n$ can be preprocessed in $\Oh(n)$ time so that the following queries can be answered in $\Oh(1)$ time:
\begin{enumerate}[(a)]
  \item\label{it:lcp} given substrings $x$, $y$ compute $\lcp(x,y)$ and determine if $x\prec y$,
  \item\label{it:suf} given indices $i,j$ compute the \emph{maximal} and \emph{minimal} suffix
  in $\suf[i,j]$,
\end{enumerate}

\end{lemma}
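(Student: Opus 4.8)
The plan is to reduce both query types to range-minimum/maximum queries (RMQ), each answerable in $\Oh(1)$ time after $\Oh(n)$-time preprocessing by the classical constant-time RMQ structure with linear-time construction. First I would build, in $\Oh(n)$ total time, the suffix array $SA$, its inverse $ISA$, and the \emph{LCP array} $H$, where $H[r]$ is the length of the longest common prefix of the $(r-1)$-st and $r$-th suffixes of $T$ in sorted order; the array $H$ is obtained in linear time from $SA$ and $ISA$ by Kasai's algorithm. On top of $H$ I would erect an $\Oh(1)$-time RMQ structure, and on top of $ISA$ an analogous structure answering both range-minimum and range-maximum queries.

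For part~(\ref{it:suf}), observe that $\suf[i,j]$ consists exactly of the full suffixes $T[k..]$ for $k\in[i,j]$, and that the lexicographic rank of $T[k..]$ is precisely $ISA[k]$. Hence the maximal suffix in $\suf[i,j]$ is $T[k..]$ for the index $k\in[i,j]$ maximizing $ISA[k]$, which a single range-maximum query on $ISA$ returns; the minimal suffix is found symmetrically by a range-minimum query. Both answers are produced in $\Oh(1)$ time.

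For part~(\ref{it:lcp}), I assume each substring is specified by its pair of endpoints, say $x=T[i..j]$ and $y=T[i'..j']$. I would first invoke the standard fact that, for $i\neq i'$, the longest common prefix of the suffixes $T[i..]$ and $T[i'..]$ equals the minimum of $H$ over the contiguous range of ranks from $\min(ISA[i],ISA[i'])+1$ to $\max(ISA[i],ISA[i'])$, i.e.\ a single RMQ on $H$, and is thus computable in $\Oh(1)$ time. The LCP of the substrings is then $\lcp(x,y)=\min(\lcp(T[i..],T[i'..]),\,|x|,\,|y|)$, since a common prefix cannot exceed either substring's length. To decide whether $x\prec y$, set $\ell=\lcp(x,y)$: if $\ell=|x|=|y|$ the substrings are equal; if $\ell$ equals exactly one of $|x|,|y|$, that shorter substring is a proper prefix of the other and is therefore the smaller; otherwise I compare the single mismatching letters $x[\ell+1]$ and $y[\ell+1]$, each readable from $T$ in $\Oh(1)$ time.

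The difficulty here is bookkeeping rather than any combinatorial subtlety: one must ensure the RMQ on $H$ uses the correct half-open range of ranks, and that the length-capping together with the prefix/mismatch case analysis for the $\prec$ test is exhaustive. Beyond these routine checks, every ingredient --- linear-time suffix- and LCP-array construction, Kasai's algorithm, and constant-time RMQ --- is classical and composes directly to yield the claimed $\Oh(n)$ preprocessing and $\Oh(1)$ query time.
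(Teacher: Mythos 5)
Your proposal is correct and follows exactly the paper's route: the paper also answers~(\ref{it:lcp}) via the LCP array equipped with a constant-time RMQ structure and~(\ref{it:suf}) via range minimum/maximum queries on $ISA$; you have merely spelled out the standard details (Kasai's algorithm, length capping, and the mismatch comparison) that the paper delegates to references.
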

\begin{proof}
Queries~(\ref{it:lcp}) is a classic application of the LCP array equipped with the data structure for range minimum queries,
see~\cite{AlgorithmsOnStrings} for details.
Queries (\ref{it:suf}) are just range minimum (maximum) queries on $ISA$,
it suffices to equip $ISA$ with the appropriate data structure~\cite{Bender:2000:LPR:646388.690192}.
\end{proof}
These simple queries can be used to answer more involved ones.
\begin{lemma}\label{lem:pow}
The enhanced suffix array can answer the following queries in constant time. Given substrings $x,y$ of~$T$
compute the largest integer $\alpha$ such that $x^\alpha$ is a prefix of $y$.
\end{lemma}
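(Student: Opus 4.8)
The plan is to reduce the computation of the largest $\alpha$ with $x^\alpha$ a prefix of $y$ to a single longest-common-prefix query, exploiting the periodicity structure that $x^\alpha$ imposes on $y$. The key observation is that $x^\alpha$ is a prefix of $y$ if and only if $y[..\alpha|x|]$ has $x$ as a prefix and is periodic with period $|x|$; equivalently, $x^\alpha$ being a prefix of $y$ relates the self-overlap of $y$ against a shifted copy of itself to the repetitions of $x$.

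\medskip

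First I would compute $\ell = \lcp(x,y)$ using Lemma~\ref{lem:all}(\ref{it:lcp}). If $\ell < |x|$, then $x$ is not even a prefix of $y$ (since $\lcp(x,y)<|x|$ means $x\not\preceq y$ as a prefix), so $\alpha = 0$ and we are done. Otherwise $x$ is a prefix of $y$, so $\alpha \ge 1$, and it remains to determine how far the period $|x|$ extends into $y$. The central idea is to compare $y$ with its own suffix $y[\,|x|{+}1..\,]$: I would compute $p = \lcp\bigl(y,\ y[\,|x|{+}1..\,]\bigr)$, again in constant time via Lemma~\ref{lem:all}(\ref{it:lcp}), noting that both are substrings of $T$ so the query applies. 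This value $p$ measures the longest prefix of $y$ that is periodic with period $|x|$; intuitively, $y[1..\,|x|]$ and $y[\,|x|{+}1..\,2|x|]$ agree exactly when the first two copies of $x$ match, and so on.

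\medskip

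The arithmetic step is to convert $p$ into the answer. Since $x$ is a prefix of $y$ and $y[1..\,|x|+p]$ is periodic with period $|x|$, the string $x^\alpha$ is a prefix of $y$ precisely when $\alpha|x| \le |x| + p$, i.e. the largest such $\alpha$ is $\alpha = 1 + \floor{p / |x|}$. I would also cap this by $\floor{|y|/|x|}$ to respect the length of $y$, although if $|x|+p$ already reflects the overlap correctly within $y$ this cap is automatic; handling the boundary where the periodic run reaches the end of $y$ needs a moment's care. Everything here is $\Oh(1)$ arithmetic on the precomputed lengths.

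\medskip

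The main obstacle I anticipate is justifying rigorously that $\lcp(y, y[\,|x|{+}1..\,])$ exactly captures the extent of the period-$|x|$ repetition and translates cleanly into the floor expression, rather than being off by one copy of $x$ at the boundary. This is the standard correspondence between longest common prefix of a string with its own shift and its periodicity (as recalled in the Preliminaries via the border/period duality), but the edge cases — when $p$ is not a multiple of $|x|$, or when the repetition runs to the very end of $y$ — must be checked so that the returned $\alpha$ neither overshoots past $|y|$ nor undercounts a full final copy of $x$.
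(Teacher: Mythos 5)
Your proposal is correct and follows essentially the same route as the paper: after checking $\lcp(x,y)\ge|x|$, both compute $\lcp(y,\,y[|x|{+}1..])$ and return $1+\floor{\lcp(y, y[|x|{+}1..])/|x|}$, justified by the standard shift/periodicity correspondence. The extra cap by $\floor{|y|/|x|}$ you worry about is indeed automatic, since $\lcp(y,y[|x|{+}1..])\le|y|-|x|$.
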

\begin{proof}
Let $y=T[i..j]$. If $\lcp(x,y)<|x|$, then the answer is clearly 0.
Otherwise, we claim  $\alpha = 1+\floor{\frac{\lcp(T[i..j], T[i+|x|..j])}{|x|}}$.
Indeed, if $y=x^\alpha z$, then $T[i+|x|..j]= x^{\alpha-1} z$, i.e.
$\lcp(T[i..j], T[i+|x|..j]) = |x|(\alpha-1)+\lcp(xz,z)<|x|\alpha$, since $\lcp(x,z)<|x|$ by maximality of $\alpha$.
On the other hand a simple inductive argument shows that $\lcp(xT[i+|x|..j], T[i+|x|..j])\ge k|x|$
implies that $x^{k+1}$ is a prefix of $T[i..j]=xT[i+|x|..j]$.
\end{proof}
Note that the queries on the enhanced suffix array of $T^R$, the reverse of $T$,
are also meaningful in terms of $T$.
In particular for a pair of substrings $x,y$ we can compute $\lcs(x,y)$
and the largest  integer $\alpha$ such that $x^\alpha$ is a suffix of $y$.

\section{Minimal Suffix}
	\label{sec:min_suffix}
	Consider a string $T$ of length $n$. In this section we first describe a linear-space data structure for $T$ that can be constructed in $\Oh(n \log n)$ time and allows to compute the minimal non-empty suffix of any substring $T[i..j]$ in $\Oh(1)$ time. Then we explain how to modify the data structure to obtain $\Oh(n \log n / \tau)$ construction time and $\Oh(\tau)$ query time for any $1 \le \tau \le \log n$.

For each $j$ we select $\Oh(\log n)$ substrings $T[k..j]$, which we call \emph{canonical}. We denote the $\ell$-th longest canonical substring ending at position $j$ by $S^\ell_j$. The substring  $S^1_j$ is $T[j..j] = T[j]$. For $\ell \ge 2$ we set $m = \floor{\ell/2} - 1$ and define $S_j^\ell$ so that

$$\left|S_j^\ell\right| = 
\begin{cases}
2 \cdot 2^{m} + ( j \bmod{2^{m}}) \text{ if } \ell \text{ is even},\\
3 \cdot 2^{m} + ( j \bmod{2^{m}}) \text{ otherwise}.
\end{cases}
$$

Note that the number of such substrings is logarithmic for each $j$.
Moreover, if we split $T$ into chunks of size $2^m$ each,
then $S_j^\ell$ will start at the boundary of one of these chunks.
This alignment property will be crucial for the construction algorithm. 
Below we explain how to use canonical substrings to compute the minimal suffix of $T[i..j]$.
We start with two auxiliary facts.

\begin{fact}\label{fct:ratio}
For any $S_{j}^\ell$ and $S_{j}^{\ell+1}$ with $\ell\ge 1$ we have $\big|S_{j}^{\ell+1}\big| < 2\big|S_{j}^{\ell}\big|$.
\end{fact}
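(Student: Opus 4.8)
The plan is to establish the bound separately for each index $\ell$ by a case analysis on its parity, because the formula for $|S_j^\ell|$ is piecewise in the parity of $\ell$ and the quantity $m=\floor{\ell/2}-1$ reacts differently to incrementing an even versus an odd index. Concretely, when $\ell$ is even the successor $\ell+1$ is odd and inherits the \emph{same} value of $m$, whereas when $\ell$ is odd the successor is even and its value of $m$ is larger by exactly one. These two regimes call for genuinely different estimates. The degenerate index $\ell=1$, to which the general formula does not apply (it would force $m=-1$), must be treated on its own, since $S_j^1=T[j]$ is defined directly.

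In the even-to-odd step, write $\ell$ for the even index, $m=\floor{\ell/2}-1$, and $r=j\bmod 2^m$; since $\ell+1$ keeps the same $m$, the two lengths are $|S_j^\ell|=2\cdot 2^m+r$ and $|S_j^{\ell+1}|=3\cdot 2^m+r$. The desired inequality $|S_j^{\ell+1}|<2|S_j^\ell|$ then unfolds to $3\cdot 2^m+r<4\cdot 2^m+2r$, i.e.\ to $0<2^m+r$, which holds trivially. This half is easy precisely because the residue $r$ is common to both lengths, so only the leading coefficient changes (from $2$ to $3$) and doubling the shorter string leaves ample room.

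The crux is the odd-to-even step, where $\ell\ge 3$ is odd, $m=\floor{\ell/2}-1$, and the successor uses modulus $m+1$. Here the residue itself jumps, from $j\bmod 2^m$ to $j\bmod 2^{m+1}$, and the latter can be nearly twice as large, so any estimate that treats the residue as roughly constant would break down; taming this jump is the main obstacle. I would resolve it with a two-sided bound: estimate the longer string from above by its worst-case residue, $|S_j^{\ell+1}|=4\cdot 2^m+(j\bmod 2^{m+1})\le 4\cdot 2^m+(2^{m+1}-1)=6\cdot 2^m-1$, and the doubled shorter string from below by discarding its residue, $2|S_j^\ell|=6\cdot 2^m+2(j\bmod 2^m)\ge 6\cdot 2^m$. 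Combining these gives $|S_j^{\ell+1}|\le 6\cdot 2^m-1<6\cdot 2^m\le 2|S_j^\ell|$, as required. This argument yields the strict inequality for every $\ell\ge 2$; it then remains only to substitute $|S_j^1|=1$ and $|S_j^2|=2$ to settle the base case, which is the extremal instance where the ratio is largest, so it is exactly the place where the bound in the statement is essentially tight.
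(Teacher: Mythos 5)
Your proof is correct and follows essentially the same route as the paper's: a separate check at $\ell=1$ followed by a parity case split for $\ell\ge 2$, with the same arithmetic in both branches (in the odd-to-even step the paper bounds the residue by $j\bmod 2^{m+1}<2^{m+1}$ where you use $\le 2^{m+1}-1$, an immaterial difference). One caveat applies equally to your write-up and to the paper's: at $\ell=1$ one has $\big|S_j^2\big|=2\cdot 2^0+(j\bmod 2^0)=2=2\big|S_j^1\big|$, so the \emph{strict} inequality actually fails there and only $\big|S_j^{\ell+1}\big|\le 2\big|S_j^{\ell}\big|$ holds; this is harmless for the fact's sole application (Lemma~\ref{lm:minsuffix} needs only the non-strict bound), but your phrase ``settle the base case'' glosses over it exactly as the paper's ``holds trivially'' does.
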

\begin{proof}
For $\ell=1$ the statement holds trivially. Consider $\ell \ge 2$. Let $m$, as before, denote $\floor{\ell/2} - 1$. If $\ell$ is even, then $\ell+1$ is odd and we have
$$ \left|S_{j}^{\ell+1}\right| = 3\cdot 2^m + (j\bmod 2^m) < 4\cdot  2^m \le 2\cdot \left (2\cdot 2^m+(j\bmod 2^m)\right) = 2\left|S_{j}^{\ell}\right|$$
while for odd $\ell$
\begin{center}$ \left|S_{j}^{\ell+1}\right|=2\cdot 2^{m+1} + (j\bmod 2^{m+1})  < 3\cdot 2^{m+1} \le 2\cdot \left (3\cdot 2^m+(j\bmod 2^m)\right) = 2\left|S_{j}^{\ell}\right|.$\end{center}
\vspace{-.7cm}
\end{proof}

For a pair of integers $1\le i < j \le n$, define $\alpha(i,j)$ to be the largest
integer $\ell$ such that $S_{j}^\ell$ is a proper suffix of $T[i..j]$. 

\begin{fact}\label{fct:comp}
Given integers $1\le i< j \le n$, the value $\alpha(i,j)$ can be computed in constant time.
\end{fact}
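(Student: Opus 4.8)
The plan is to reduce the computation of $\alpha(i,j)$ to a purely arithmetic inversion of the length formula, since $\alpha(i,j)$ depends only on the indices and not on the letters of $T$. First I would observe that every $S^\ell_j$ ends exactly at position~$j$, so $S^\ell_j = T[j-|S^\ell_j|+1..j]$ is automatically a suffix of $T[i..j]$ as soon as it fits inside it, i.e.\ as soon as $|S^\ell_j|\le j-i+1$, and it is a \emph{proper} suffix precisely when $|S^\ell_j|\le j-i$. Hence $\alpha(i,j)=\max\{\ell : |S^\ell_j|\le j-i\}$, and since $|S^1_j|=1\le j-i$ this maximum is always well defined.

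The second ingredient is strict monotonicity of the lengths in $\ell$, which turns the condition $|S^\ell_j|\le d$ (writing $d=j-i$) into a threshold. For a fixed scale $m$ the even index $\ell=2m+2$ has length $2\cdot 2^m+(j\bmod 2^m)$ and the odd index $\ell=2m+3$ has the strictly larger length $3\cdot 2^m+(j\bmod 2^m)$; since $j\bmod 2^m<2^m$, both lie in the dyadic range $[2^{m+1},2^{m+2})$, and the odd length of scale $m$ is strictly below the even length of scale $m+1$. Together with $|S^1_j|=1<|S^2_j|$ this shows that $\ell\mapsto|S^\ell_j|$ is strictly increasing, and moreover that each dyadic range $[2^{m+1},2^{m+2})$ contains exactly the two canonical lengths of indices $2m+2$ and $2m+3$.

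Given this structure the computation is immediate. For $d=1$ the answer is $\ell=1$. Otherwise I would set $m=\floor{\log_2 d}-1$, so that $d$ falls in the range $[2^{m+1},2^{m+2})$, and let $r=j\bmod 2^m$; comparing $d$ against the two relevant lengths yields $\alpha(i,j)=2m+3$ if $d\ge 3\cdot 2^m+r$, then $\alpha(i,j)=2m+2$ if $2\cdot 2^m+r\le d<3\cdot 2^m+r$, and $\alpha(i,j)=2m+1$ otherwise. Each branch is justified by monotonicity: the next larger canonical length already exceeds $2^{m+2}>d$, while in the last branch the odd length of scale $m-1$ is at most $2^{m+1}-1<d$ and therefore fits.

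The one delicate point, which I would verify most carefully, is the boundary behaviour between consecutive scales: checking that the ``gap'' interval $[2\cdot 2^m,\,2\cdot 2^m+r)$ is resolved to index $2m+1$ (the odd length of the previous scale), noting that this interval is non-empty only when $r\ge 1$, which forces $m\ge 1$ so that scale $m-1$ indeed exists. Once these three candidate indices are shown to exhaust all cases, the cost is clear: locating the range needs an integer logarithm $\floor{\log_2 d}$, and $j\bmod 2^m$ is a reduction modulo a power of two (a bitwise \textsc{and}); both, together with a constant number of comparisons, run in $\Oh(1)$ time in the word-RAM model, giving the claimed constant query time.
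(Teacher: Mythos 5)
Your proposal is correct and follows essentially the same route as the paper: both compute an integer logarithm of the substring length to pin $\alpha(i,j)$ down to a constant number of consecutive candidate indices (the paper gets $\{2m-1,2m,2m+1\}$ with $m=\floor{\log|T[i..j]|}$, you get the same three candidates re-indexed via $d=j-i$), and then resolve the answer by $\Oh(1)$ length comparisons. Your write-up is somewhat more explicit about monotonicity of $\ell\mapsto|S_j^\ell|$ and the small-$d$ boundary cases, but the underlying argument is the same.
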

\begin{proof}
Let $m=\floor{\log|T[i..j]|}$.
Observe that $$\left|S_{j}^{2m-1}\right|=3\cdot 2^{m-2} + (j\bmod 2^{m-2})<2^{m}\le |T[i..j]|$$
and $$\left|S_{j}^{2m+2}\right| = 2\cdot 2^{m} + (j \bmod 2^{m}) \ge 2^{m+1} > |T[i..j]|.$$
Consequently $\alpha(i,j)$ is equal to $2m-1$, $2m$ or $2m+1$ which can be verified in constant time.
\end{proof}

\begin{lemma}
\label{lm:minsuffix}
	The minimal suffix of $T[i..j]$ is either equal to
	\begin{enumerate}[(a)]
      \item $T[p..j]$, where $p$ is the starting position of the minimal suffix in $\suf [i, j] = \{T[i..], T[i+1..], \ldots, T[j..]\}$, or
	  \item the minimal suffix of $S_{j}^{\alpha(i,j)}$.
\end{enumerate}
\end{lemma}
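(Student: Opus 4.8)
The plan is to set $w = T[i..j]$, write $T[q..j]$ for its minimal suffix, put $\beta = \alpha(i,j)$, and let $s$ be the starting position of $S_j^\beta$, so that $S_j^\beta = T[s..j]$ with $s > i$ (as $S_j^\beta$ is a \emph{proper} suffix of $w$). The whole argument hinges on one quantitative observation: $S_j^\beta$ is long. Indeed, by maximality of $\alpha$ we have $\big|S_j^{\beta+1}\big| \ge |w|$, and Fact~\ref{fct:ratio} gives $\big|S_j^{\beta+1}\big| < 2\big|S_j^\beta\big|$, so $\big|S_j^\beta\big| > \tfrac12|w|$. Thus $S_j^\beta$ already covers more than half of $w$, and any suffix strictly longer than it must begin in the first half of $[i,j]$. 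I would then split on whether $T[q..j]$ is a suffix of $S_j^\beta$ (i.e.\ $q \ge s$) or strictly longer (i.e.\ $q < s$).

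First I would dispose of the case $q \ge s$. The suffixes of $S_j^\beta$ are exactly $\{T[k..j] : s \le k \le j\}$, a subfamily of the suffixes of $w$. Since $T[q..j]$ is minimal among \emph{all} suffixes of $w$ and is itself a suffix of $S_j^\beta$, it is a fortiori minimal among the suffixes of $S_j^\beta$; this is alternative~(b).

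The substantial case is $q < s$, where I claim alternative~(a) holds, i.e.\ $q = p$ for the position $p$ in the statement. Here $|T[q..j]| > \big|S_j^\beta\big| > \tfrac12|w|$, so $q$ lies in the first half of $[i,j]$ and in fact $q - i < |w| - \big|S_j^\beta\big| < \tfrac12|w|$. I would establish $q = p$ by two opposing comparisons. Minimality of $T[q..j]$ among suffixes of $w$ gives $T[q..j] \preceq T[p..j]$; for the reverse I would transfer the relation $T[p..] \preceq T[q..]$ (minimality of $T[p..]$ among the full suffixes in $\suf[i,j]$) down to the truncations at position $j$. If the first mismatch of $T[p..]$ and $T[q..]$ lay within both $T[p..j]$ and $T[q..j]$, the two comparisons would be forced to agree, contradicting $T[q..j] \preceq T[p..j]$ unless $q = p$; hence one of the truncations must be a prefix of the other, and the harmless subcase ($T[p..j]$ a prefix of $T[q..j]$) combines with $T[q..j] \preceq T[p..j]$ to give $q = p$.

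The hard part is ruling out the remaining possibility that $T[q..j]$ is a \emph{proper} prefix of $T[p..j]$ (which would force $p < q$), since this is precisely the periodic configuration. There $T[p..j]$ has period $d := q - p$ and $T[q..j] = T[p+d..j]$ is its border. Here the length bound pays off: $d \le q - i < \tfrac12|w| < \big|S_j^\beta\big| < |T[q..j]|$, so $d < |T[q..j]|$. From this one checks $p + 2d \le j$, so the aligned suffix $T[q+d..j] = T[p+2d..j]$ is still a suffix of $w$, and by periodicity it is a proper prefix of $T[q..j]$, whence $T[q+d..j] \prec T[q..j]$ — contradicting the minimality of $T[q..j]$. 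So this subcase cannot occur, and the proof of case~$q < s$ closes. I expect this periodicity step — forcing the period to be strictly shorter than the minimal suffix so that a smaller aligned suffix genuinely sits inside $w$ — to be the crux of the whole argument.
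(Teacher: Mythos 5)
Your proof is correct. It shares with the paper the one quantitative hinge --- by maximality of $\alpha(i,j)$ and Fact~\ref{fct:ratio}, $\big|S_j^{\alpha(i,j)}\big| > \tfrac12|T[i..j]|$ --- but it differs in how the combinatorial core is handled. The paper simply cites Lemma~1 of~\cite{Minmaxsuf}, which states that the minimal suffix of $T[i..j]$ is either $T[p..j]$ or the shortest non-empty border of $T[p..j]$, the latter having length at most $\tfrac12|T[p..j]|$; the length bound on $S_j^{\alpha(i,j)}$ then immediately places the second alternative inside the canonical substring. You instead split on whether the minimal suffix $T[q..j]$ starts inside $S_j^{\alpha(i,j)}$ and, in the case $q<s$, re-derive the needed dichotomy from first principles: the $\lcp$ case analysis transferring $T[p..]\preceq T[q..]$ to the truncations, and the period-shift argument (the period $d=q-p$ is shorter than $|T[q..j]|$, so $T[q+d..j]$ is a non-empty suffix of $T[i..j]$ that is a proper prefix of $T[q..j]$, contradicting minimality) to rule out $T[q..j]$ being a proper border of $T[p..j]$ in the ``long'' regime. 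This is essentially an inline proof of the cited lemma's relevant consequence. What you lose is brevity; what you gain is a self-contained argument that does not depend on the external reference, and your periodicity step is exactly the standard reason why a long border of $T[p..j]$ cannot be the minimal suffix. All the inequalities you use ($d\le q-i < \tfrac12|w| < |T[q..j]|$, hence $q+d\le j$) check out.
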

\begin{proof}
	By Lemma~$1$ in~\cite{Minmaxsuf} the minimal suffix is either equal to $T[p..j]$ or to its shortest non-empty border. Moreover, in the latter case the length of the minimal suffix is at most $\frac{1}{2} |T[p..j]| \le \frac{1}{2} |T[i..j]|$. On the other hand, by Fact~\ref{fct:ratio} the length of $S_{j}^{\alpha(i,j)}$ is at least $\frac{1}{2} |T[i..j]|$. Thus, in the second case the minimal suffix of $T[i..j]$ is the minimal suffix of $S_{j}^{\alpha(i,j)}$.
\end{proof}

Recursively applying Lemma~\ref{lm:minsuffix} we obtain the following

\begin{corollary}\label{cor:min-main}
	For $\ell=1,\ldots,\alpha(i,j)$ let $p_\ell$ be the minimal suffix in $Suf[j-|S_{j}^\ell|+1,j]$, and let $p$ be the minimal suffix in $Suf[i,j]$.
	The minimal suffix of $T[i..j]$ starts at one of the positions in $\{p, p_0,p_1,\ldots,p_{\alpha(i,j)}\}$.
\end{corollary}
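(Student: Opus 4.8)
The plan is to iterate Lemma~\ref{lm:minsuffix} and track what happens to the candidate set at each level of the recursion. At the top level, Lemma~\ref{lm:minsuffix} tells us that the minimal suffix of $T[i..j]$ is either $T[p..j]$, where $p$ is the starting position of the minimal element of $\suf[i,j]$, or it is the minimal suffix of the canonical substring $S_j^{\alpha(i,j)}$. The first alternative contributes the position $p$ to our candidate set, exactly as in the statement. The interesting case is the second alternative, which reduces the problem to computing the minimal suffix of a canonical substring ending at the \emph{same} position $j$.

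So the key step is to observe that when we recurse into the minimal suffix of $S_j^\ell$ for some $\ell$, we are again computing the minimal suffix of a substring of $T$ ending at position $j$, namely $T[j-|S_j^\ell|+1..j]$. I would therefore apply Lemma~\ref{lm:minsuffix} again to this substring. Its first alternative is the minimal suffix in $\suf[j-|S_j^\ell|+1,\,j]$, whose starting position is precisely $p_\ell$ as defined in the corollary; its second alternative is the minimal suffix of $S_j^{\alpha(j-|S_j^\ell|+1,\,j)}$, a strictly shorter canonical substring ending at $j$ (since, by Lemma~\ref{lm:minsuffix}, the recursion only happens when the answer has length at most half the current length). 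The crucial point is that the recursive subproblem is again of the same form, ending at $j$, so each recursive step either terminates by producing some $p_\ell$ or passes to a strictly shorter canonical substring $S_j^{\ell'}$ with $\ell' < \ell$.

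Since the indices of the canonical substrings ending at $j$ form a finite decreasing sequence bounded below by $1$ (and $S_j^1 = T[j]$ is a single letter, whose minimal suffix is itself), the recursion terminates after finitely many steps, and every position it ever emits is of the form $p$ or $p_\ell$ for some $\ell \in \{1,\ldots,\alpha(i,j)\}$. Collecting all these candidates, the minimal suffix of $T[i..j]$ must start at one of $\{p, p_1, \ldots, p_{\alpha(i,j)}\}$ (the position $p_0$ in the statement being the degenerate single-letter case). I would make this precise by a straightforward induction on $|T[i..j]|$: the base case is a one-letter substring, and the inductive step is exactly one application of Lemma~\ref{lm:minsuffix}, where the second alternative falls under the inductive hypothesis because $|S_j^{\alpha(i,j)}| < |T[i..j]|$.

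The only point requiring a little care is bookkeeping the indices: I must verify that each canonical substring reached by the recursion is indexed by some $\ell$ with $1 \le \ell \le \alpha(i,j)$, so that every $p_\ell$ appearing in the recursion already lies in the set listed by the corollary. This follows because the first recursion enters at $\ell = \alpha(i,j)$, and Fact~\ref{fct:comp} together with the halving bound in Lemma~\ref{lm:minsuffix} guarantees that every subsequent $\alpha$ value is strictly smaller, so all indices stay within $\{1,\ldots,\alpha(i,j)\}$. This indexing check is the main obstacle, though it is a mild one; the combinatorial heart of the argument is entirely contained in the already-proved Lemma~\ref{lm:minsuffix}.
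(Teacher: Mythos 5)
Your proof is correct and follows essentially the same route as the paper, which derives the corollary simply by ``recursively applying Lemma~\ref{lm:minsuffix}''; your induction on $|T[i..j]|$, with the observation that each recursive call lands on a canonical substring ending at $j$ of strictly smaller index (so every emitted candidate is $p$ or some $p_\ell$ with $\ell \le \alpha(i,j)$), is precisely the intended argument spelled out in detail.
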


With some knowledge about the minimal suffixes of canonical substrings, the set of candidate positions can be reduced.

\begin{observation}\label{obs:min-main}
	For any $k,k'$ such that $1\le k'<k\le \alpha(i,j)$: 
	\begin{enumerate}[(a)]
  		\item if the minimal suffix of $S_{j}^{k}$ is longer than $\big|S_j^{k'}\big|$, then positions $p_0,p_1,\ldots,p_{k'}$ do not need to be considered
  		as candidates in Corollary~\ref{cor:min-main},
   		\item if the minimal suffix of $S_{j}^{k}$ is not longer than $\big|S_j^{k'}\big|$, then positions $p_{k'+1},p_{k'+2},\ldots,p_{k}$ do not need to be considered as candidates in Corollary~\ref{cor:min-main}.
	\end{enumerate}
\end{observation}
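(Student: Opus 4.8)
The plan is to express the minimal suffix $m_k$ of the canonical substring $S_j^k$ as the minimum of the candidate strings $T[p_\ell..j]$ with $\ell\le k$, and then read off the two parts by translating the length hypotheses into a statement about where $m_k$ starts. Writing $a_\ell:=j-|S_j^\ell|+1$, I would first apply Corollary~\ref{cor:min-main} to the substring $S_j^k=T[a_k..j]$. Since the canonical substrings ending at $j$ are nested with strictly increasing lengths, $S_j^\ell$ is a proper suffix of $S_j^k$ exactly when $\ell<k$, so $\alpha(a_k,j)=k-1$ and the corollary's candidate set for the range $[a_k,j]$ is precisely $\{p_0,p_1,\ldots,p_k\}$. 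Hence the minimal suffix $m_k=T[q..j]$ of $S_j^k$ equals $\min\{T[p_\ell..j]:0\le\ell\le k\}$; in particular $m_k\preceq T[p_\ell..j]$ for every $\ell\le k$, with equality for at least one minimizing index $\ell^*$.

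Next I would translate lengths into positions: as all these strings end at $j$, we have $|m_k|=j-q+1$ and $|S_j^{k'}|=j-a_{k'}+1$, so part~(a)'s hypothesis $|m_k|>|S_j^{k'}|$ is the same as $q<a_{k'}$, while part~(b)'s $|m_k|\le|S_j^{k'}|$ is the same as $q\ge a_{k'}$. For part~(a), every candidate with $\ell\le k'$ starts at $p_\ell\ge a_\ell\ge a_{k'}>q$, so $T[p_\ell..j]$ is strictly shorter than $m_k$; combined with $m_k\preceq T[p_\ell..j]$ this forces $m_k\prec T[p_\ell..j]$. Moreover any minimizing index satisfies $p_{\ell^*}=q<a_{k'}$, hence $\ell^*>k'$ and survives the pruning, so each strictly dominated candidate $p_0,\ldots,p_{k'}$ may be discarded.

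For part~(b), $q\ge a_{k'}$ means $m_k=T[q..j]$ is itself a suffix of $S_j^{k'}$, so $m_k\succeq m_{k'}$; since $[a_{k'},j]\subseteq[a_k,j]$ also gives $m_k\preceq m_{k'}$, I get $m_k=m_{k'}$. Thus this common value is already attained by a retained candidate $T[p_\ell..j]$ with $\ell\le k'$, while every $\ell$ with $k'<\ell\le k$ satisfies $T[p_\ell..j]\succeq m_k=m_{k'}$ and is safe to drop. I expect the crux to be the first step, namely recognizing $m_k$ as the minimum over the first batch of candidates via Corollary~\ref{cor:min-main} together with the identity $\alpha(a_k,j)=k-1$; once this is in place both parts reduce to the elementary equivalence between comparing $|m_k|$ to $|S_j^{k'}|$ and comparing $q$ to $a_{k'}$. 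The one point needing care is that discarding a candidate never removes the unique minimizer: in part~(a) the length mismatch rules out ties with the surviving $\ell^*$, and in part~(b) the retained index $\le k'$ still attains the shared value $m_k=m_{k'}$.
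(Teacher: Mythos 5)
The paper states this Observation without any proof, so there is no argument of its own to compare against; what you give is a correct and complete justification along what is clearly the intended route. Your key step --- that the nestedness of the canonical substrings yields $\alpha(a_k,j)=k-1$, hence by Corollary~\ref{cor:min-main} the minimal suffix $m_k$ of $S_j^k$ equals $\min\{T[p_\ell..j]:\ell\le k\}$, and that comparing $|m_k|$ with $|S_j^{k'}|$ is the same as comparing the starting position of $m_k$ with that of $S_j^{k'}$ --- is exactly the leverage the Observation relies on, and you correctly verify in both halves that every discarded candidate is dominated by a retained one (including the tie analysis $m_k=m_{k'}$ in part~(b)). The only blemish is the index $p_0$, which the paper itself never defines (Corollary~\ref{cor:min-main} quantifies $\ell$ from $1$ yet lists $p_0$ as a candidate); your inequalities over $0\le\ell\le k'$ implicitly assume $p_0$ behaves like the other $p_\ell$, but this is inherited from the paper's statement and does not affect the substance of the argument.
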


We now explain how this result is used to achieve the announced time and space bounds.

\subsection{Data Structure}
Apart from the enhanced suffix array, we store, for each $j = 1,\ldots,n$, a bit vector $B_j$ of length $\alpha(1,j)$. Here $B_j[\ell] = 1$ if and only if $\ell = 1$ or the minimal suffix of $S^\ell_j$ is longer than $|S^{\ell-1}_j|$. Since $\alpha(1,j)=\Oh(\log j)$, each vector $B_j$ can be stored 
in a constant number of machine words, which gives $\Oh(n)$ space in total.

\subsection{Query}
To compute the minimal suffix of $T[i..j]$, we determine $\alpha=\alpha(i,j)$ (see Fact~\ref{fct:comp}) and locate the highest set bit $B_j[h]$ such that $1 \le h \le \alpha$. 
As $B_j[h]=1$ and $B_j[h']=0$ for $h'\in \{h+1,\ldots,\alpha\}$ Observation~\ref{obs:min-main} implies that
the minimal suffix starts at either $p$ or $p_h$. $ISA[p]$ is the minimum in $ISA[i,j]$, and $ISA[p_h]$ is the minimum in $ISA[j + 1 - |S_j^h|,  j]$. Hence the enhanced suffix array can be used to compute $p$ and $p_h$ as well as find the smaller of $T[p..j]$ and $T[p_h..j]$, all in $\Oh(1)$ time.

\subsection{Construction}
It suffices to  explain how vectors $B_j$ are computed. At the beginning we set all bits $B_j[1]$ to $1$. For each $m = 1,\ldots,\lfloor\log n\rfloor$ we compute the minimal suffixes of $S^\ell_j$, where $\floor{\ell/2}-1 = m$ and $1 \le j \le n$. To do this, we first divide $T$ into \emph{chunks} of length $2^m$. Each substring $S^\ell_j$ starts at the beginning of one of these chunks and has length smaller than $4\cdot 2^m$. Hence $S^\ell_j$ is a prefix of a substring consisting of at most four consecutive chunks. Recall that a variant of Duval's algorithm (see~Algorithm 3.1 in~\cite{Duval}) takes linear time to compute the lengths of minimal suffixes of all prefixes of a given string. We run this algorithm for each four (or less at the end) consecutive chunks and thus obtain the minimal suffixes of the substrings $S^\ell_j$, where $\floor{\ell/2}-1 = m$ and $1 \le j \le n$, in $\Oh(n)$ time. The value of $B_j[\ell]$ can now be found directly by comparing the length of minimal suffix of $S^\ell_j$ with $|S^{\ell-1}_j|$. Note that the space usage is $\Oh(n)$. We proved

\begin{theorem}
 	A string $T$ of length $n$ can be stored in an $\Oh(n)$-space structure that enables to compute the minimal suffix of any substring of $T$ in $\Oh(1)$ time. This data structure can be constructed in $\Oh(n \log n)$ time.
\end{theorem}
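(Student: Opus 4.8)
The plan is to establish the three assertions—$\Oh(n)$ space, $\Oh(1)$ query time, and $\Oh(n\log n)$ construction time—separately, since the data structure itself is already pinned down: the enhanced suffix array of $T$ together with the bit vectors $B_1,\ldots,B_n$. For the space bound I would observe that the enhanced suffix array occupies $\Oh(n)$ by Lemma~\ref{lem:all}, while each $B_j$ has length $\alpha(1,j)=\Oh(\log j)$ and therefore fits in $\Oh(1)$ machine words; summing over $j$ yields $\Oh(n)$.

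For the query I would first invoke Fact~\ref{fct:comp} to obtain $\alpha=\alpha(i,j)$ in $\Oh(1)$ time, and then argue that the candidate set for the starting position of the minimal suffix of $T[i..j]$, which by Corollary~\ref{cor:min-main} is $\{p,p_0,\ldots,p_\alpha\}$, collapses to just $\{p,p_h\}$, where $h$ is the highest index in $\{1,\ldots,\alpha\}$ with $B_j[h]=1$ (such $h$ exists because $B_j[1]=1$ and $\alpha\ge 1$). This reduction is a direct application of Observation~\ref{obs:min-main}: since $B_j[\ell]=0$ for every $\ell\in\{h+1,\ldots,\alpha\}$, part (b) with $k=\ell,\ k'=\ell-1$ discards $p_{h+1},\ldots,p_\alpha$; and since $B_j[h]=1$, meaning $h=1$ or the minimal suffix of $S_j^{h}$ is longer than $|S_j^{h-1}|$, part (a) with $k=h,\ k'=h-1$ discards $p_0,\ldots,p_{h-1}$. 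Locating $h$ is a most-significant-bit computation on $B_j$ masked to its first $\alpha$ bits, which is $\Oh(1)$ in the word-RAM model. The positions $p$ and $p_h$ are the $ISA$-minima over $[i,j]$ and $[j+1-|S_j^{h}|,j]$, obtained from the range-minimum structure of Lemma~\ref{lem:all}(\ref{it:suf}), and the smaller of $T[p..j]$ and $T[p_h..j]$ is selected via Lemma~\ref{lem:all}(\ref{it:lcp})—all in constant time.

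For the construction it suffices to fill in the vectors $B_j$, and the bulk of the work is computing, for each canonical substring, the length of its minimal suffix. Here I would exploit the alignment property. Fixing $m$, the two relevant indices are $\ell\in\{2m+2,2m+3\}$ (the only solutions of $\lfloor\ell/2\rfloor-1=m$), and every $S_j^\ell$ at this level begins at a boundary of the length-$2^m$ chunks and has length below $4\cdot 2^m$; hence each such substring is a prefix of the window of at most four consecutive chunks starting at its first position. Running the variant of Duval's algorithm that reports the minimal-suffix lengths of all prefixes of a string on each such window costs $\Oh(2^m)$ per window and $\Oh(n)$ per level, hence $\Oh(n\log n)$ over all $m$; a final $\Oh(1)$ comparison of each computed minimal-suffix length against $|S_j^{\ell-1}|$ (read off from the closed-form length formula) then sets each $B_j[\ell]$.

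The step I expect to demand the most care is the query-correctness argument, and specifically confirming that the two surviving candidates are exactly $p$ and $p_h$: one must check that the bit semantics of $B_j$ match the two cases of Observation~\ref{obs:min-main} precisely and that the boundary situations ($h=1$, or $\alpha=1$) leave the reasoning intact. The linear-per-level construction bound is the other point to verify, but it reduces to the routine remark that the windows at a fixed level have total length $\Oh(n)$.
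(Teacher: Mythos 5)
Your proposal is correct and follows essentially the same route as the paper: the same bit vectors $B_j$, the same query via Fact~\ref{fct:comp}, a most-significant-bit lookup, and Observation~\ref{obs:min-main} reducing the candidates to $\{p,p_h\}$, and the same chunk-aligned construction running Duval's variant on windows of four chunks per level. Your spelled-out application of parts (a) and (b) of Observation~\ref{obs:min-main} is just a more explicit rendering of the step the paper states in one line.
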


To obtain a data structure with $\Oh(n \log n / \tau)$ construction and $\Oh(\tau)$ query time, we define the bit-vectors in a slightly different 
way. We set $B_j$ to be of size $\floor{\alpha(1,j)/\tau}$ with $B_j[\ell]=1$ if and only if $\ell=1$ or
the minimal suffix of $S_{j}^{\tau\ell}$ is longer than $|S_{j}^{\tau(\ell-1)}|$.
This way we need only $\Oh(\log n/ \tau)$ phases in the construction algorithm, so it takes $\Oh(n \log n / \tau )$ time.

Again, let $p_\ell$ denote the starting position of the minimal suffix in $\suf[j - |S_j^\ell|+1, j]$. To compute the minimal suffix of $T[i..j]$, we determine $\alpha=\alpha(i,j)$ and locate the highest set bit $B_j[h]$, $\floor{\alpha/\tau} \ge h \ge 1$. Then, by Observation~\ref{obs:min-main} $B_j[h]=1$ and $B_j[h']=0$ for $h'\in\{h+1,\ldots,\floor{\alpha(i,j)/\tau} \}$ implies  the minimal suffix starts at one of the positions
$\{p, p_{(h-1)\tau+1},\ldots,p_{h\tau}, p_{\tau \floor{\frac{\alpha}{\tau}}},\ldots, p_{\alpha(i,j)}\}$.
Each of these positions can be computed in constant time, each two of the suffixes can be compared in constant time as well. That is, the data structure allows to compute the minimal suffix of any substring in $\Oh(\tau)$ time. Summing up,

\begin{theorem}
 	For any $1 \le \tau \le \log n$, a string $T$ of length $n$ can be stored in an $\Oh(n)$-space data structure that enables to compute the minimal suffix of any substring of $T$ in $\Oh(\tau)$ time. This data structure can be constructed in $\Oh(n \log n / \tau)$ time.
\end{theorem}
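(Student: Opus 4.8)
The plan is to reuse the entire machinery of the constant-query theorem and only \emph{sparsify} the stored information, trading a factor of $\tau$ in space-and-construction for the same factor in query time. First I would redefine, for each position $j$, the bit vector $B_j$ to have length $\floor{\alpha(1,j)/\tau}$, with $B_j[\ell]=1$ if and only if $\ell=1$ or the minimal suffix of $S_j^{\tau\ell}$ is longer than $\big|S_j^{\tau(\ell-1)}\big|$. Since $\alpha(1,j)=\Oh(\log j)$, each $B_j$ now fits in $\Oh(\log n/\tau)$ bits, hence in $\Oh(1)$ machine words, so together with the enhanced suffix array of $T$ (and that of $T^R$) the whole structure still occupies $\Oh(n)$ space.

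For the construction I would note that filling the vectors requires knowing the minimal suffixes of the canonical substrings $S_j^{\ell}$ only for $\ell$ a multiple of $\tau$; the comparison lengths $\big|S_j^{\tau(\ell-1)}\big|$ are handed to us by the explicit size formula and need no computation. There are only $\Oh(\log n/\tau)$ such levels, and each one is processed exactly as in the dense case: split $T$ into chunks aligned to the relevant power of two, and run the linear-time variant of Duval's algorithm on every window of at most four consecutive chunks to recover the minimal suffixes of all the corresponding $S_j^{\ell}$ simultaneously. Each level costs $\Oh(n)$, so the total construction time is $\Oh(n\log n/\tau)$.

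For a query on $T[i..j]$ I would first compute $\alpha=\alpha(i,j)$ in $\Oh(1)$ time by Fact~\ref{fct:comp}, and then locate, in a single word operation, the highest index $h$ with $1\le h\le \floor{\alpha/\tau}$ and $B_j[h]=1$. The crux is to show, via Observation~\ref{obs:min-main}, that the candidate set of Corollary~\ref{cor:min-main} collapses to $\Oh(\tau)$ positions. Because $B_j[h]=1$, part~(a) of the observation (with $k=\tau h$ and $k'=\tau(h-1)$) discards the low block $p_0,\ldots,p_{\tau(h-1)}$; because $B_j[h']=0$ for every $h'\in\{h+1,\ldots,\floor{\alpha/\tau}\}$, repeatedly invoking part~(b) (with $k=\tau h'$, $k'=\tau(h'-1)$) telescopes to discard $p_{\tau h+1},\ldots,p_{\tau\floor{\alpha/\tau}}$. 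What remains is $p$ together with the two windows $\{p_{\tau(h-1)+1},\ldots,p_{\tau h}\}$ and $\{p_{\tau\floor{\alpha/\tau}},\ldots,p_{\alpha}\}$, each of size at most $\tau$. Every surviving $p_\ell$ is the range minimum of $ISA$ over $\suf[j-|S_j^\ell|+1,\,j]$, and $p$ is the range minimum over $\suf[i,j]$, all obtainable in $\Oh(1)$ time by Lemma~\ref{lem:all}(\ref{it:suf}); the resulting $\Oh(\tau)$ suffixes are then compared pairwise in $\Oh(1)$ time each via Lemma~\ref{lem:all}(\ref{it:lcp}), yielding the minimal suffix in $\Oh(\tau)$ total time.

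The step I expect to be most delicate is the telescoping in the query analysis: one must verify that the exclusions produced by part~(b) over consecutive zero bits tile the index block $p_{\tau h+1},\ldots,p_{\tau\floor{\alpha/\tau}}$ with no gaps, that the applications of the observation respect its hypothesis $1\le k'<k\le\alpha$, and that the top residual window $\{p_{\tau\floor{\alpha/\tau}},\ldots,p_{\alpha}\}$ has at most $\tau$ entries since $\alpha<\tau(\floor{\alpha/\tau}+1)$. The boundary case $h=1$ (where $B_j[1]=1$ holds by convention rather than by a length comparison) must be checked separately, but it only leaves $\Oh(\tau)$ extra low-index candidates and so does not affect the bound. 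Everything else is a routine specialization of the already-established constant-query data structure.
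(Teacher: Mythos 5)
Your proposal is correct and follows essentially the same route as the paper: the same sparsified bit vectors $B_j[\ell]$ indexed by multiples of $\tau$, the same $\Oh(\log n/\tau)$-phase construction via the Duval variant on chunk windows, and the same reduction of the candidate set to $\{p\}\cup\{p_{\tau(h-1)+1},\ldots,p_{\tau h}\}\cup\{p_{\tau\floor{\alpha/\tau}},\ldots,p_{\alpha}\}$ via Observation~\ref{obs:min-main}. Your explicit verification of the telescoping of part~(b) and of the window sizes is a welcome elaboration of details the paper leaves implicit, but it is not a different argument.
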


\subsection{Applications}
As a corollary we obtain an efficient data structure for computing Lyndon decompositions of substrings of $T$. We recall the definitions first. A string $w$ is said to be a \emph{Lyndon word} if and only if it is strictly smaller than its proper cyclic rotations. For a nonempty string $x$ a decomposition $x = w_1^{\alpha_1} w_2^{\alpha_2} \ldots w_k^{\alpha_k}$ is called a \emph{Lyndon decomposition} if and only if $w_1 > w_2 > \ldots > w_k$ are Lyndon words, see~\cite{chen1958free}.

\begin{lemma}[\cite{Duval}]\label{lem:duv}
	If $x = w_1^{\alpha_1} w_2^{\alpha_2} \ldots w_k^{\alpha_k}$ is a Lyndon decomposition, then $w_k$ is the minimal suffix of $x$.
\end{lemma}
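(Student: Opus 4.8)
The plan is to reduce the claim to the classical characterization of Lyndon words in terms of their suffixes rather than their rotations. First I would record the fact, equivalent to the cyclic-rotation definition given above and proved in~\cite{Duval}, that a Lyndon word $w$ is strictly smaller than every proper non-empty suffix of itself; this is the form in which the Lyndon property is actually convenient here. I would also pass to the ungrouped factorization $x = u_1 u_2 \cdots u_m$ with $u_1 \ge u_2 \ge \cdots \ge u_m$ obtained by spelling out each block $w_t^{\alpha_t}$, so that $w_k = u_m$ is literally the last factor. The goal then becomes: show $u_m \preceq s$ for every non-empty suffix $s$ of $x$, with equality only when $s = u_m$.

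Next I would classify the non-empty suffixes of $x$ by where they begin. A suffix either starts exactly at a factor boundary, in which case it has the form $u_t u_{t+1} \cdots u_m$, or it starts strictly inside some factor $u_t$, in which case it has the form $b\, u_{t+1} \cdots u_m$, where $b$ is a proper non-empty suffix of $u_t$. Since $u_m$ itself is the boundary suffix obtained for $t = m$, it suffices to compare $u_m$ against each of these two types and check it is never larger.

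For the interior suffixes I would argue as follows: because $b$ is a proper non-empty suffix of the Lyndon word $u_t$, the suffix characterization gives $u_t \prec b$, and since the factors are non-increasing we have $u_m \preceq u_t$, whence $u_m \prec b$. For the boundary suffixes I would use $u_t \ge u_m$: if $u_t \succ u_m$ then already $u_m \prec u_t$, and if $u_t = u_m$ then all factors from $t$ to $m$ equal $u_m$, so the suffix is the power $u_m^{\,m-t+1}$, of which $u_m$ is a prefix, giving $u_m \preceq u_t \cdots u_m$ with equality only for $t = m$.

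The one step needing genuine care — and where I expect the real work to sit — is the recurring subtlety that $u_m \prec v$ as strings does not by itself give $u_m \prec v z$ for an extension $vz$: this can only fail when $u_m$ is a prefix of $v$. I would dispose of it once and for all: if $u_m$ is not a prefix of $v$, the first mismatch between $u_m$ and $v$ survives in $vz$; if $u_m$ is a prefix of $v$, then since $u_m \neq v$ it is a \emph{proper} prefix of $v$, hence of $vz$; either way $u_m \prec vz$. Applying this with $v = b$ for the interior case and with $v = u_t$ for the boundary case (together with the power argument when $u_t = u_m$) yields $u_m \preceq s$ for every non-empty suffix $s$, strictly unless $s = u_m$. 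As distinct suffixes have distinct lengths, $u_m = w_k$ is therefore the unique minimal suffix of $x$.
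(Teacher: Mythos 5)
The paper does not prove this lemma at all---it is imported from Duval's paper \cite{Duval} as a known fact---so there is no in-paper argument to compare yours against. Your proof is correct and self-contained. You replace the rotation-based definition of Lyndon words by the equivalent suffix characterization (a Lyndon word is strictly smaller than each of its proper non-empty suffixes), ungroup the decomposition into $x=u_1 u_2\cdots u_m$ with $u_1\succeq\cdots\succeq u_m$ and $u_m=w_k$, and classify the non-empty suffixes of $x$ into those starting at a factor boundary and those starting inside a factor. The interior case follows from $u_m\preceq u_t\prec b$; the boundary case from $u_m\preceq u_t$, together with the power argument when $u_t=u_m$; and you correctly isolate and dispose of the only delicate step, namely that $u_m\prec v$ implies $u_m\prec vz$ for any extension $z$, since either the first mismatch survives or $u_m$ is a proper prefix of $vz$. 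All suffixes are covered (including the degenerate interior case $t=m$, where $z$ is empty), and strictness except at $s=u_m$ gives uniqueness. The one external ingredient is the equivalence of the two characterizations of Lyndon words, which is legitimate to cite from \cite{Duval}; this is essentially the standard proof of Duval's observation.
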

\begin{lemma}
	Let $x = uv^\alpha$, where $v$ is the minimal suffix of $x$ and $u$ does not end with $v$. Let $u = z_1^{\beta_1} z_2^{\beta_2} \ldots z_\ell^{\beta_\ell}$ be the Lyndon decomposition of $u$. Then $x = z_1^{\beta_1} z_2^{\beta_2} \ldots z_\ell^{\beta_\ell} v^\alpha$ is the Lyndon decomposition of $x$.
\end{lemma}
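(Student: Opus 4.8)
The plan is to verify directly that the displayed factorization satisfies the defining properties of a Lyndon decomposition and then invoke uniqueness (the Chen--Fox--Lyndon theorem). Since $z_1 \succ z_2 \succ \cdots \succ z_\ell$ are already Lyndon words, once I establish that $v$ is a Lyndon word and that $z_\ell \succ v$, the chain $z_1 \succ \cdots \succ z_\ell \succ v$ of Lyndon words is precisely a Lyndon decomposition of $x$, which by uniqueness must be \emph{the} Lyndon decomposition.

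First I would show that $v$ is a Lyndon word. Recall that a string is Lyndon if and only if it is strictly smaller than each of its proper suffixes. Every proper suffix $s$ of $v$ is also a suffix of $x$, and since $|s| < |v|$ the string $v$ cannot be a prefix of $s$; as $v$ is the minimal suffix of $x$ we have $v \preceq s$, hence $v \prec s$. Thus $v$ is Lyndon. (Note that $v$ being a suffix of $x$ forces $\alpha \ge 1$.)

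The main step is to prove $z_\ell \succ v$, and this is where the argument is most delicate. I would argue by the trichotomy of the lexicographic order and rule out the two remaining cases. The equality $z_\ell = v$ is impossible, because $z_\ell$ is a suffix of $u$ and this would contradict the hypothesis that $u$ does not end with $v$. To exclude $z_\ell \prec v$, I would use the classical fact that for Lyndon words $w \prec v$ the concatenation $wv$ is again Lyndon and satisfies $w \prec wv \prec v$; iterating this starting from $w := z_\ell$ yields $z_\ell v^{k} \prec v$ for every $k \ge 1$, in particular $z_\ell v^{\alpha} \prec v$. But $z_\ell$ is a suffix of $u$, so $z_\ell v^{\alpha}$ is a suffix of $x = u v^\alpha$ that is strictly smaller than $v$, contradicting the minimality of $v$. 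Hence $z_\ell \succ v$.

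Finally, combining the three observations, $z_1 \succ \cdots \succ z_\ell \succ v$ is a strictly decreasing sequence of Lyndon words whose powers concatenate to $x$, so by uniqueness of the Lyndon decomposition it is the Lyndon decomposition of $x$. The only nonroutine ingredient is the inequality $z_\ell \succ v$, and within it the iterated use of the $w \prec v \Rightarrow w \prec wv \prec v$ lemma; I expect the subtle case to be the one where $z_\ell$ is a proper prefix of $v$, since there the comparison $z_\ell v^\alpha \prec v$ cannot be settled by a single character mismatch and genuinely needs the Lyndon concatenation property.
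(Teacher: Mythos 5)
Your proof is correct, but it takes a genuinely different route from the paper's. The paper argues top-down: it starts from the (unique) Lyndon decomposition $x = w_1^{\alpha_1}\cdots w_k^{\alpha_k}$ of $x$ and applies its Lemma~6 (the last Lyndon factor equals the minimal suffix) to get $w_k = v$; it then pins down the exponent --- $\alpha_k \ge \alpha$ because otherwise $w_1^{\alpha_1}\cdots w_{k-1}^{\alpha_{k-1}}$ would end with $v$ and hence have minimal suffix $w_{k-1} \preceq v = w_k$, contradicting $w_{k-1}\succ w_k$, and $\alpha_k \le \alpha$ because $u$ does not end with $v$ --- peels off $v^{\alpha}$, and identifies the remaining prefix's decomposition with that of $u$ by uniqueness. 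You instead work bottom-up: you verify that the displayed factorization is itself a valid Lyndon decomposition, by proving that $v$ is a Lyndon word via the proper-suffix characterization and that $z_\ell \succ v$ via the concatenation lemma for Lyndon words, and only then invoke uniqueness once. Both arguments are sound. The paper's is shorter because it reuses the already-stated lemma and never needs to show that $v$ is Lyndon or compare $z_\ell$ with $v$ directly; yours is self-contained modulo two standard facts not stated in the paper (a word is Lyndon iff it is strictly smaller than all its proper nonempty suffixes, and $w \prec v$ with both Lyndon implies $wv$ is Lyndon, whence $wv \prec v$ since $v$ is a proper suffix of $wv$), and you are right that the only delicate point is when $z_\ell$ is a proper prefix of $v$, which is exactly what the iterated concatenation lemma handles.
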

\begin{proof}
	Any word admits a unique Lyndon decomposition~\cite{chen1958free}. Let $x = w_1^{\alpha_1} w_2^{\alpha_2} \ldots w_k^{\alpha_k}$ be the Lyndon decomposition of $x$. From Lemma~\ref{lem:duv} we obtain that $w_k=v$, moreover $w_{k-1}>w_k$ is the minimal suffix of $w_1^{\alpha_1} w_2^{\alpha_2} \ldots w_{k-1}^{\alpha_{k-1}}$, so $\alpha_k \ge \alpha$. Clearly $\alpha_k \le \alpha$, which proves equality.
	From the definition it follows that $u = w_1^{\alpha_1} w_2^{\alpha_2} \ldots w_{k-1}^{\alpha_{k-1}}$ is the Lyndon decomposition of $u$ and hence it coincides with the decomposition $u = z_1^{\beta_1} z_2^{\beta_2} \ldots z_\ell^{\beta_\ell}$. The claim follows.
\end{proof}

\begin{corollary}
 	For any $1 \le \tau \le \log n$ a string $T$ of length $n$ can be stored in an $\Oh(n)$-space data structure that enables to compute the Lyndon decomposition of any substring of $T$ in $\Oh( k\tau)$ time, where $k$ is the number of distinct factors in the decomposition. This data structure can be constructed in $\Oh(n \log n / \tau)$ time.
\end{corollary}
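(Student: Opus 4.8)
The plan is to combine the $\Oh(\tau)$-time minimal-suffix structure of the preceding theorem with the enhanced suffix arrays of $T$ and of $T^R$, and to recover the Lyndon decomposition of a query substring by peeling off its factors one at a time, from right to left, using the two structural lemmas above. The space and preprocessing bounds will then be inherited directly from the minimal-suffix structure, and the query bound will follow from charging $\Oh(\tau)$ to each of the $k$ distinct factors.

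For the query, given $x=T[i..j]$ I would first compute its minimal suffix $v$ using the $\Oh(\tau)$-time structure. By Lemma~\ref{lem:duv} this $v$ equals the rightmost factor $w_k$ of the Lyndon decomposition of $x$, and in particular $v$ is a Lyndon word. Next I would determine the largest integer $\alpha$ such that $v^\alpha$ is a suffix of $x$. Since $v$ and $x$ are both substrings of $T$, this is exactly the suffix variant of the query in Lemma~\ref{lem:pow}, answered in $\Oh(1)$ time on the enhanced suffix array of $T^R$ remarked upon after that lemma. I would then output the factor $v^{\alpha}$ and recurse on the prefix $u=T[i..\,j-\alpha|v|]$, which is again a substring of $T$, so the same two queries apply unchanged at the next step.

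The correctness of the peeling, and the bound of exactly $k$ iterations, is the main point requiring care. By maximality of $\alpha$ the prefix $u$ does not end with $v$, so the preceding lemma applies and states that prepending the Lyndon decomposition of $u$ to $v^{\alpha}$ yields the Lyndon decomposition of $x$. Since the Lyndon decomposition is unique, this decomposition must coincide with $w_1^{\alpha_1}\cdots w_k^{\alpha_k}$, which forces the peeled block $v^{\alpha}$ to be precisely the last factor $w_k^{\alpha_k}$ rather than a strict part of it. The hard part is thus exactly this verification that each iteration strips one \emph{complete} distinct factor; once it is established, the loop clearly terminates after $k$ iterations, each producing one of the $k$ distinct Lyndon factors in decreasing order.

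Finally, for the complexity I would observe that each of the $k$ iterations costs $\Oh(\tau)$ for the minimal-suffix query and $\Oh(1)$ for the exponent computation, giving $\Oh(k\tau)$ query time in total. The space remains $\Oh(n)$ and the preprocessing $\Oh(n\log n/\tau)$, inherited from the minimal-suffix structure of the previous theorem; the enhanced suffix arrays of $T$ and $T^R$ contribute only $\Oh(n)$ additional space and $\Oh(n)$ construction time, which does not affect the stated bounds.
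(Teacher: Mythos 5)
Your proposal is correct and matches the paper's intended argument: the corollary follows from Lemma~\ref{lem:duv} and the peeling lemma that precede it, by repeatedly extracting the minimal suffix $v$ in $\Oh(\tau)$ time, computing the maximal exponent $\alpha$ with the suffix variant of Lemma~\ref{lem:pow} on $T^R$ in $\Oh(1)$ time, and recursing on the remaining prefix for a total of $k$ iterations. The step you flag as requiring care is exactly what the paper's lemma on $x=uv^\alpha$ establishes, so invoking it together with uniqueness of the Lyndon decomposition closes the argument.
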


\section{Maximal Suffix}
	\label{sec:max_suffix}
	We now turn to the maximal suffix problem. Our solution is based on the following notion.
For $1\le p\le j\le |T|$ we say that a position $p$ is \emph{$j$-active} if there is no position $p'\in\{p+1,\ldots,j\}$ such that $T[p'..j] \succ T[p..j]$. 
Equivalently, $p$ is $j$-active exactly when the suffix $T[p..j]$ is the maximal suffix of some substring of $T$ ending at $j$.
From the definition it follows that a starting position of the maximal suffix of $T[i..j]$ is the minimal $j$-active position in $[i,j]$. 

\begin{example}
If $T[1..8]=\texttt{dcccabab}$, the $8$-active positions are $1, 2,3, 4, 6, 8$.
\end{example}

We will not store $j$-active positions for each $j$ explicitly because there can be too many of them. Instead we will consider, for each $1 \le j \le n$, a partition of an interval $[1,j]$ into a number of disjoint subintervals. 
%The partition will have some nice properties, in particular, the minimal $j$-active position in each subinterval will be computable in constant time. 
For this partition we will keep a bit vector where set bits correspond to the subintervals containing $j$-active positions. Computing the maximal suffix 
of $T[i..j]$ will consist of three steps:  first, we compute the subinterval $i$ belongs to, call it $[\ell',r']$, and, using the bit vector, the leftmost 
subinterval completely to the right and containing a $j$-active position, call it $[\ell,r]$. Then the minimal $j$-active position must lie in one of these 
two subintervals. More precisely, it either lies in $[i,r']$, or in $[\ell,r]$. We separately compute the maximal suffix of $T[i..j]$ starting in these two subintervals, and return the lexicographically larger one.

\subsection{Data Structure}
Our data structure for computing maximal suffixes of substrings of $T$ consists of two parts. Partitions and bit vectors will be used to locate the first subinterval to the right of $i$ that contains a $j$-active suffix, and data structures associated with suffix arrays of $T$ and for the reverse of $T$ will be used  to compute the minimal $j$-active position in this subinterval.

\textbf{Nice partitions and bit vectors:} Nice partitions are defined recursively. The nice partition of $[1,j]$ consists of disjoint subintervals $B_{1},B_{2},\ldots,B_{\ell}$ and satisfies the following properties:
\begin{enumerate}
\item $|B_{1}| \le |B_{2}| \le \ldots \le |B_{\ell}|$;
\item Length of each subinterval is a power of two;
\item Lengths of each two consecutive subintervals are the same, or differ by a factor of two;
\item There are no three subintervals of equal length.
\end{enumerate}

The nice partition of an interval $[1,1]$ consists of the interval $[1,1]$ itself. Given a nice partition of $[1,j]$ we can create a nice partition of $[1,j+1]$ by adding a new interval $[j+1,j+1]$. Then it might be the case that there are three intervals of length $1$. In such case we merge the two leftmost ones into one of length $2$ and repeat until there are at most 2 intervals of each length. 
%We will discuss the construction time below, but at the moment we are interested in the definition only. 
The result is a nice partition of $[1,j]$ satisfying properties 1-4.

For each $j$ we store a bit vector of length $\Oh(\log n)$ indicating which subintervals of the partition contain $j$-active positions. 

We will also make use of two pre-computed tables. For each $w \in \{0,1\}^{\floor{(\log n )/ 3}}$ and for each $\ell=1,\ldots,\floor{(\log n)/3}$ we store the number of set bits in a prefix of $w$ of length $\ell$ and the position of $w$ storing the $\ell$-th set bit. This way we can answer any rank/select query on a bit vector of length $\Oh(\log n)$ by a constant number of table look-ups. 
%The precomputation requires $\Oh(n)$ time and space.

The second table will be used for locating the subinterval of the partition of $[1,j]$ containing~$i$. The partition of $[1,j]$ is completely determined by specifying $k$ such that last subinterval $B_\ell$ is of length $2^k$, and one word of length $\Oh(\log n)$, where the $t$-th bit is set when there are two blocks of length $2^t$ in the partition. We store the answers for each $w \in \{0,1\}^{\floor{(\log n )/ 3}}$ and for each possible position not larger than $n^{1/3}$. Again, we are able to process a query with a constant number of table lookups.
%, and the precomputation requires $\Theta(n^{\frac{2}{3}})=\Oh(n)$ time and space.

\subsection{Query}
Suppose that we are asked to find the maximal suffix of a substring $T[i,j]$. Recall that we want to do this in three steps: first, locate the subinterval $i$ belongs to, call it $[\ell',r']$, then find the leftmost interval on its right containing a $j$-active suffix, call it $[\ell,r]$. Using the second table we
compute the subinterval of the partition of $[1,j]$ containing $i$, and then we can use rank/select queries to retrieve the leftmost subinterval
to the right containing $j$-active position. Overall, the first step takes constant time.

The second step is to compute the lexicographically maximal suffix of $T[i..j]$ assuming that it starts in $[i,r']$, and the third step is to compute the
lexicographically maximal suffix of $T[i..j]$ assuming that it starts in $[\ell,r]$. Both these steps are actually very similar: it is enough to show how to
find the lexicographically maximal suffix of $T[i..j]$ assuming that it starts in $[\ell,r]$, where $|T[\ell,j]|\leq 2|T[r,j]|$ (such assumption follows
from the definition of a nice partition, where in the worst possible case $|B_{i}|=2^{i}$). For this we need some
combinatorial properties of maximal suffixes which we prove below. Let $T[\mu..j]$ be the desired lexicographically maximal suffix of $T[i..j]$. 
The goal will be to show that knowing the length of $|T[\mu..j]|$ up to a factor of two is actually enough to compute $\mu$ in $\Oh(1)$ time.

\begin{lemma}[Lemma~$2$ in~\cite{Minmaxsuf}]
	\label{lm:next_prefix}
	Let $P_1 = T[p_1..j]$ be a prefix of $T[\mu..j]$ and let $P_2 = T[p_2..j]$, where $T[p_2..]$ is the maximal suffix in $\suf [\ell, p_1-1]$. If $P_1$ is not a prefix of $P_2$, then $\mu = p_{1}$. Otherwise, $P_2$ is also a prefix of $T[\mu..j]$ and moreover $|P_2| > |P_1|$.
\end{lemma}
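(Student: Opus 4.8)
The plan is to write $M = T[\mu..j]$ and lean on two facts from the surrounding setup: that $M$ is the maximal suffix of $T[\ell..j]$, so $T[p..j]\preceq M$ for every $p\in[\ell,j]$ (call this \emph{self-maximality} of $M$), and that $T[p_2..]=\max \suf[\ell,p_1-1]$ among full suffixes. Since $P_1$ is a prefix of $M$ we have $\mu\le p_1$, and I would split the argument on whether $\mu=p_1$ or $\mu<p_1$. The whole lemma then reduces to a single claim: \emph{if $\mu<p_1$, then $P_2$ is a prefix of $M$.} Granting this claim, both stated conclusions fall out by bookkeeping, and the case $\mu=p_1$ is easy; so I would prove the claim first.

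To prove the claim I would argue by contradiction, playing the two maximality properties against each other — this is the step I expect to be the crux. Assume $\mu<p_1$ but $P_2=T[p_2..j]$ is \emph{not} a prefix of $M$. Because $p_2\le p_1-1\le j$, the string $P_2$ is a suffix of $T[\ell..j]$, so self-maximality gives $P_2\preceq M$; as $P_2$ is not a prefix of $M$ it is in particular $\ne M$, hence $P_2\prec M$ must be witnessed by a genuine mismatch, i.e.\ there is an offset $g<\min(|P_2|,|M|)$ with $T[\mu..\mu+g-1]=T[p_2..p_2+g-1]$ and $T[p_2+g]\prec T[\mu+g]$. The key point is that $g<|M|$ and $g<|P_2|$ force $\mu+g\le j$ and $p_2+g\le j$, so this mismatch lies strictly inside both truncated strings and therefore survives when they are extended to the full suffixes: $T[\mu..]$ and $T[p_2..]$ agree up to offset $g$ and then $T[p_2+g]\prec T[\mu+g]$, giving $T[p_2..]\prec T[\mu..]$. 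But $\ell\le\mu<p_1$ means $\mu\in[\ell,p_1-1]$, so maximality of $T[p_2..]$ over $\suf[\ell,p_1-1]$ yields $T[\mu..]\preceq T[p_2..]$ — a contradiction. Hence $P_2$ is a prefix of $M$.

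From the claim the rest is routine. Since $P_2$ is a prefix of $M$ we get $|P_2|\le|M|$, i.e.\ $p_2\ge\mu$, and since $p_2\le p_1-1$ we also get $|P_2|=j-p_2+1>j-p_1+1=|P_1|$; as $P_1$ and $P_2$ are both prefixes of $M$ with $|P_1|<|P_2|$, the shorter one $P_1$ is a prefix of $P_2$. This is exactly part two of the lemma, and contrapositively it shows that $\mu<p_1$ implies $P_1$ is a prefix of $P_2$. It remains to treat $\mu=p_1$: then $P_1=T[p_1..j]=M$, and if $P_1$ were a prefix of $P_2$ then, since $p_2<p_1=\mu$ gives $|P_2|>|M|=|P_1|$, the string $P_1$ would be a \emph{proper} prefix of $P_2$, whence $M=P_1\prec P_2=T[p_2..j]$, contradicting $T[p_2..j]\preceq M$. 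Thus $P_1$ fails to be a prefix of $P_2$ exactly when $\mu=p_1$, which gives part one. Combining the two cases yields the lemma.
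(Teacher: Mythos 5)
Your proof is correct. Note that the paper itself gives no proof of this lemma --- it is imported verbatim as Lemma~2 of \cite{Minmaxsuf} --- so there is no in-paper argument to compare against; but your argument is the natural one and matches what the cited lemma requires. The crux is exactly where you put it: since $P_2\preceq T[\mu..j]$ by maximality of $T[\mu..j]$, a failure of $P_2$ to be a prefix of $T[\mu..j]$ forces a mismatch at some offset $g<\min(|P_2|,|T[\mu..j]|)$, and because that mismatch occurs before either string is truncated at $j$, it transfers to the untruncated suffixes and yields $T[p_2..]\prec T[\mu..]$, contradicting the maximality of $T[p_2..]$ in $\suf[\ell,p_1-1]$ (which applies to $\mu$ because $\ell\le\mu<p_1$). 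The only thing worth flagging is that you do use two hypotheses that live in the surrounding text rather than in the lemma statement itself --- that $T[\mu..j]$ is the maximal suffix of $T[i..j]$ and that $\mu\ge\ell$ (i.e.\ the maximal suffix is assumed to start in $[\ell,r]$); both are indeed part of the setup in which the lemma is invoked, and the second is genuinely needed, so it is good that you stated them explicitly.
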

%\begin{proof}
%	Define $\ell = |P_1|$ and consider a substring $Q$, the prefix of $P_2$ of length $\ell$. We claim that $Q \preceq P_1$. Indeed, if $Q \succ P_1$ then since $T[\mu..j]$ starts with $P_1$ it follows that $P_2 \succ T[\mu..j]$, which is impossible as $T[\mu..j]$ is the maximal suffix of $T[i..j]$.
	
%	If $P_1$ is a not a prefix of $P_2$ then $Q \prec P_1$ and thus no suffix in $\suf [i, p_1-1]$ can start with $P_1$, otherwise such a suffix would be larger than $T[p_2..]$. Hence $\mu$ must be equal to $p_1$ as $P_1$ is a prefix of $T[\mu..j]$.

%	If $Q = P_1$, then $P_2 = T[p_2..]$ starts with $P_1$ and is the maximal suffix in $\suf [i, p_1-1]$. Suppose that $P_2$ is not a prefix of $T[\mu..j]$. Then $T[p_2..p_2+k] = T[\mu..\mu+k]$ and $T[p_2+k+1] < T[\mu+k+1]$ for some $k$ with $|P_1| \leq k < j - \mu + 1$. Therefore $T[p_2..] \prec T[\mu..]$ and $T[p_2..]$ is not the maximal suffix in $\suf [i, p_1-1]$, which is a contradiction.
%\end{proof}

Let $T[p_{1}..]$ be the maximal suffix in $\suf [\ell, r]$ and $T[p_{2}..]$ the maximal suffix in $\suf [\ell, p_{1}-1]$. Assume that $T[\mu..j]$ starts 
somewhere in $[\ell, r]$, so that $P_{1}=T[p_{1}..j]$ is a prefix of $T[\mu..j]$. Define $P_{2}=T[p_{2}..j]$ and assume that $P_{1}$ is a prefix of $P_{2}$ (if 
not, the above lemma immediately gives us $\mu$). We state two more lemmas which describe the properties of such suffixes $P_{1}$ and $P_{2}$
when the length of $\rho = T[p_2..p_1-1]$ is smaller than $|P_{1}|$ (i.e., when $|P_{2}| \le 2 |P_{1}|$). These lemmas are essentially Lemmas~$4$ and~$5$ of~\cite{Minmaxsuf}, but because we use different notation, we repeat their proofs here. 

\begin{lemma}
\label{lm:our_periodicity}
With the notation above, $\rho$ is the shortest period of $P_{2}$, i.e., $T[p_2..j] = \rho^s \rho'$ where $s \ge 1$ and $\rho'$ is a proper prefix of $\rho$, and $\rho$ is the shortest string for which such decomposition exists. Moreover, actually $s \ge 2$.
\end{lemma}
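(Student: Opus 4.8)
The plan is to read the periodic structure directly off the hypotheses and then rule out any shorter period by playing the two maximality assumptions against each other.

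First I would record the basic factorization. Since $p_2\in[\ell,p_1-1]$ we have $P_2=T[p_2..j]=T[p_2..p_1-1]\,T[p_1..j]=\rho P_1$, so $P_1$ is a suffix of $P_2$; it is also a prefix of $P_2$ by assumption, and $|P_1|<|P_2|$ because $\rho$ is non-empty. Hence $P_1$ is a border of $P_2$, and by the border–period duality this is exactly the statement that $P_2$ has period $|\rho|$. Spelling out this period gives the claimed decomposition $P_2=\rho^s\rho'$ with $\rho'$ a proper prefix of $\rho$ and $s\ge 1$. The running hypothesis $|\rho|<|P_1|=|P_2|-|\rho|$ yields $2|\rho|<|P_2|$, so the decomposition contains at least two full copies of $\rho$, i.e. $s\ge 2$.

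The substantive part is showing $\rho$ is the \emph{shortest} period. I would argue by contradiction: suppose $P_2$ has a period $q$ with $q<|\rho|$, and look at the two positions $p_1-q$ and $p_2+q$. From $q<|\rho|$ one gets $p_2<p_2+q<p_1$ and $p_2<p_1-q<p_1$, so both positions lie in $[\ell,p_1-1]\subseteq[\ell,r]$. Using that $q$ is a period of $P_2$, the suffix $T[p_1-q..]$ agrees with $T[p_1..]$ along all of $P_1$ (their first $|P_1|$ characters), and $T[p_2+q..]$ agrees with $T[p_2..]$ along their first $|P_2|-q$ characters. Tracking where these agreements end, each comparison of the two full suffixes reduces to comparing the \emph{same} pair of tails $T[j+1-q..]$ and $T[j+1..]$: concretely $T[p_1-q..]\preceq T[p_1..]$ is equivalent to $T[j+1-q..]\preceq T[j+1..]$, whereas $T[p_2+q..]\preceq T[p_2..]$ is equivalent to $T[j+1..]\preceq T[j+1-q..]$. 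The maximality of $T[p_1..]$ in $\suf[\ell,r]$ forces the former and the maximality of $T[p_2..]$ in $\suf[\ell,p_1-1]$ forces the latter, so $T[j+1-q..]=T[j+1..]$. This is impossible, since two suffixes of $T$ of different lengths cannot coincide (and when $j=n$ the tail $T[j+1..]$ is empty, so the first inequality already fails outright). Therefore no period shorter than $|\rho|$ exists, and $\rho$ is the shortest period of $P_2$.

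The delicate point is this reduction of the two full-suffix comparisons to a single pair of tails. One has to pin down exactly where the periodic agreement terminates — at position $j-q$ for the $p_1$-comparison and at position $j$ for the $p_2$-comparison — so that, after cancelling the shared prefixes, both maximality conditions bear on the identical pair $T[j+1-q..],\,T[j+1..]$ but in opposite directions. It is precisely the collision of these two opposite inequalities that produces the forbidden equality of two suffixes of distinct lengths, which is what breaks the assumption of a shorter period.
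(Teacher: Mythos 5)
Your proof is correct, but it takes a genuinely different route from the paper's. The paper first invokes the periodicity lemma: assuming a shortest period $\gamma$ with $|\gamma|<|\rho|$, it uses $|\gamma|+|\rho|\le|P_2|$ to force $|\gamma|$ to divide $|\rho|$, writes $\rho=\gamma^k$, and then splits into cases according to whether $T[p_1..]\prec\gamma T[p_1..]$ or $T[p_1..]\succ\gamma T[p_1..]$, propagating the inequality through powers of $\gamma$ to contradict the maximality of $T[p_1..]$ in $\suf[\ell,r]$ in the first case and of $T[p_2..]$ in $\suf[\ell,p_1-1]$ in the second. You instead take an \emph{arbitrary} period $q<|\rho|$, verify that $p_1-q$ and $p_2+q$ both lie in the relevant ranges, and observe that the periodic agreement collapses both full-suffix comparisons onto the single tail pair $T[j+1-q..]$ versus $T[j+1..]$, with the two maximality hypotheses pulling in opposite directions; the resulting equality of two suffixes of distinct lengths is absurd. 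Your bookkeeping of the agreement lengths ($|P_1|$ for the $p_1$-comparison, $|P_2|-q$ for the $p_2$-comparison) and of the membership of $p_1-q$ and $p_2+q$ in $[\ell,r]$ and $[\ell,p_1-1]$ checks out, as does the boundary case $j=n$. What your version buys is the complete avoidance of the periodicity (Fine--Wilf) lemma and of any divisibility considerations, making the argument more elementary and self-contained; the paper's version is closer in style to the standard machinery around maximal suffixes and critical factorizations, where arguing via powers of the primitive root is the conventional idiom. The derivations of the decomposition $P_2=\rho^s\rho'$ and of $s\ge 2$ coincide in both treatments.
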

\begin{proof}
Since $P_1$ is a border of $P_2$, $\rho = T[p_2..p_1-1]$ is a period of $P_2$. It remains to prove that no shorter period is possible. So, consider
the shortest period $\gamma$, and assume that $|\gamma| < |\rho|$. Then $|\gamma| + |\rho| \le 2 |\rho| \le |T[p_2..j]|$, and by the periodicity 
lemma substring~$P_2$ has another period $\gcd(|\gamma|, |\rho| )$. Since $\gamma$ is the shortest period, $|\rho|$ must be a multiple of $|\gamma|$, i.e., $\rho = \gamma^k$ for some $k \ge 2$. 

Suppose that $T[p_1..] \prec \gamma T[p_1..]$. Then prepending both parts of the latter inequality by copies of $\gamma$ gives $\gamma^{\ell-1} T[p_1..] \prec \gamma^\ell T[p_1..]$ for any $1 \le \ell \le k$, so from the transitivity of $\prec$ we get that	
$T[p_1..] \prec \gamma^k T[p_1..]=T[p_{2}..]$, which contradicts the maximality of $T[p_{1}..]$ in $\suf [\ell, r]$.
Therefore $T[p_1..] \succ \gamma T[p_1..]$, and consequently $\gamma^{k-1} T[p_1..] \succ \gamma^k T[p_1..]$.  But $\gamma^{k-1} T[p_1..] = T[p_2 + |\gamma|..]$ and $\gamma^k T[p_1..] = T[p_2..]$, so $T[p_2 + |\gamma|..]$ is larger than $T[p_2..]$ and belongs to $\suf [i, p_1-1]$, which is a contradiction.

The final observation that $s\ge 2$ follows from the condition that $|P_{2}| \le 2 |P_{1}|$.
\end{proof}
\begin{figure}[ht]
\begin{center}
\newcommand*\vv[2]{
\draw (#1, .3) -- (#1, 0);
\draw (#1, -.25) node[anchor=mid] {$#2$};
}
\newcommand*\vvd[2]{
\draw[dotted] (#1, .3) -- (#1, 0);
\draw (#1, -.25) node[anchor=mid] {$#2$};
}
\begin{tikzpicture}[scale=1]
\useasboundingbox (1,-.2) rectangle (10,.3);
\foreach \y in {0,.3}{
\draw[dashed] (1,\y) -- (1.75, \y);
\draw (1.75,\y) -- (9.25, \y);
\draw[dashed] (9.25,\y) -- (10, \y);
}
\vv{2}{i};
\vv{9}{j};
\vv{4.95}{p_2};
\vv{5.55}{p_1};
\vv{3.75}{\mu};
\draw  (5.25, 0.15) node {\footnotesize{$\rho$}};
\vvd{3.25}{\ell};
\vvd{6}{r};
\clip(3.5,-1) rectangle (9.25,1);
\foreach \x in {3.15,3.75, ..., 10}{
\draw (\x, .3) sin (\x+.3, .45) cos (\x+.6, .3);
}
\end{tikzpicture}%
\end{center}
\caption{\label{fig}
A schematic illustration of Lemma~\ref{lm:max}.
}
\end{figure}

\begin{lemma}\label{lm:max}
Suppose that $P_2 = \rho P_1 = \rho^s \rho'$. 
If $|T[\mu..j]| \leq 2|P_1|$, then $T[\mu..j]$ is the longest suffix of $T[\ell..j]$
equal to $\rho^t\rho'$ for some integer $t$, see also Fig.~\ref{fig}
\end{lemma}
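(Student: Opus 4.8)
The plan is to reduce the statement to showing that $T[\mu..j]$ is periodic with period $\rho$, and then to pin down its length by a maximality argument. Since we are in the case where $P_1$ is a prefix of $P_2$, Lemma~\ref{lm:next_prefix} (applied with $P_1$) tells us that $P_2=\rho^s\rho'$ is itself a prefix of $T[\mu..j]$; as both strings end at $j$, this forces $\mu\le p_2$, and consequently $T[p_2..j]=\rho^s\rho'$ is also a \emph{suffix} of $T[\mu..j]$. Thus $\rho^s\rho'$ is simultaneously a prefix and a suffix of $T[\mu..j]$. I write $q$ for the starting position of the longest suffix of $T[\ell..j]$ of the form $\rho^t\rho'$, i.e.\ the leftmost position $\ge\ell$ from which the period $\rho$ runs cleanly to $j$; note $q\le p_2<p_1\le r$, so $q\in[\ell,r]$ and $T[q..j]$ is a legitimate candidate for the maximal suffix. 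The goal becomes $\mu=q$.

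The key step is to prove that $T[\mu..j]$ itself equals $\rho^{t}\rho'$ for some $t$. Here the hypothesis $|T[\mu..j]|\le 2|P_1|$ does the work: together with $|P_2|=|P_1|+|\rho|$ and $s\ge2$ it gives $|T[\mu..j]|-|\rho^s\rho'|\le |P_1|-|\rho|$, so the prefix occurrence and the suffix occurrence of the border $\rho^s\rho'$ overlap substantially. Viewing $T[\mu..j]$ as a string with border $\rho^s\rho'$, it has a period of length $P=|T[\mu..j]|-|\rho^s\rho'|\le |P_1|-|\rho|$; since $P+|\rho|\le |P_1|\le|\rho^s\rho'|$, the prefix $\rho^s\rho'$ carries both the period $P$ and the period $|\rho|$, and I may invoke the periodicity (Fine--Wilf) lemma. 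Because $\rho$ is the \emph{shortest} period of $P_2=\rho^s\rho'$ (Lemma~\ref{lm:our_periodicity}), the greatest common divisor collapses to $|\rho|$, so $|\rho|$ divides $P$. As the first $P$ characters of $T[\mu..j]$ form a prefix of $\rho^s\rho'$ and hence equal $\rho^{P/|\rho|}$, the period $P$ of the whole string upgrades to period $\rho$; the suffix occurrence $\rho^s\rho'$ then fixes the phase, giving $T[\mu..j]=\rho^{t}\rho'$ with $\mu$ congruent to $p_2$ modulo $|\rho|$ and $\mu\ge q$.

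It remains to identify $t$ with the longest admissible power. For this I record two opposite monotonicities. From the maximality of $T[p_1..]$ in $\suf[\ell,r]$ and the identity $T[p_2..]=\rho\,T[p_1..]$ I get $T[p_1..]\succ\rho\,T[p_1..]$, and prepending further copies of $\rho$ shows that the \emph{full} suffixes $\rho^c T[p_1..]$ strictly decrease in $c$; dually, for the \emph{truncated} suffixes one has $\rho^a\rho'\prec\rho^b\rho'$ whenever $a<b$ by the prefix rule, so among all positions of phase $p_2$ in $[\ell,r]$ the truncated suffix is largest at the leftmost one, namely $q$. Since $T[\mu..j]=\rho^t\rho'$ is one of these truncated suffixes and $T[\mu..j]\succeq T[q..j]$ (because $q\in[\ell,r]$ and $T[\mu..j]$ is maximal), this monotonicity forces the largest power, i.e.\ $\mu=q$. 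Hence $T[\mu..j]=T[q..j]$, the longest suffix of $T[\ell..j]$ of the form $\rho^t\rho'$.

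The main obstacle is the middle step: converting the length bound into a self-overlap of the border $\rho^s\rho'$ and then applying the periodicity lemma so as to upgrade ``$T[\mu..j]$ merely contains $\rho^s\rho'$ as a prefix and as a suffix'' into ``$T[\mu..j]$ has period $\rho$.'' One must check carefully that the overlap is large enough for Fine--Wilf to bite -- this is exactly where $s\ge2$ and $|T[\mu..j]|\le2|P_1|$ are used -- and that the shortest-period hypothesis of Lemma~\ref{lm:our_periodicity} rules out a proper divisor of $|\rho|$. The two monotonicity facts and the candidacy of $T[q..j]$ are then comparatively routine bookkeeping.
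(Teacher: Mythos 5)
Your proof is correct and follows essentially the same route as the paper: show that $P_2=\rho^s\rho'$ is a border of $T[\mu..j]$, use the bound $|T[\mu..j]|\le 2|P_1|$ to force enough overlap for a periodicity argument yielding $T[\mu..j]=\rho^t\rho'$, and finish via the monotonicity of $\rho^t\rho'$ in $t$. The only cosmetic difference is the middle step, where the paper glues the two overlapping $|\rho|$-periodic occurrences of $P_2$ and then invokes primitivity of $\rho$ to fix the phase, while you apply Fine--Wilf to the two periods $P$ and $|\rho|$ of the prefix occurrence and use the shortest-period property from Lemma~\ref{lm:our_periodicity} to get $|\rho|\mid P$.
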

\begin{proof}
Clearly $P_2$ is a border of $T[\mu..j]$.
Since $P_2 = \rho P_1$ this implies $|T[\mu..j]| + |\rho| \leq 2|P_2|$.
Consequently the occurrences of $P_2$ as a prefix and as a suffix of $|T[\mu..j]|$ have an overlap with at 
least $|\rho|$ positions. As $|\rho|$ is a period of $P_2$, this implies that $|\rho|$ is also a period of $T[\mu..j]$. % TODO reference
Thus $T[\mu..j]=\rho''\rho^r \rho'$, where $r$ is an integer and $\rho''$ is a proper suffix of $\rho$.
Moreover $\rho^2$ is a prefix of $T[\mu..j]$, since it is a prefix $P_2$, which is a prefix of $T[\mu..j]$.
Now $\rho''\ne \epsilon$  would imply a non-trivial occurrence of $\rho$ in $\rho^2$, which contradicts 
$\rho$ being primitive. %(Lemma 3.2.1 in \cite{Gusfield:1997:AST:262228}).
Thus $T[\mu..j] = \rho^r\rho'$. If $t>r$, then $\rho^t\rho' \succ \rho^r \rho'$, so
$T[\mu..j]$ is the longest suffix of $T[i..j]$ equal to $\rho^t\rho'$ for some integer $t$.
\end{proof}

\begin{lemma}
Given a subinterval $[\ell, r]$ such that $|T[\ell,j]|\le 2|T[r..j]|$, and assuming that the lexicographically largest suffix
$T[\mu..j]$ of $T[i..j]$ starts there, we can compute $\mu$ in $\Oh(1)$ time.
\end{lemma}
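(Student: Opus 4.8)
The plan is to reduce the entire query to a single ``count the repetitions of $\rho$'' lookup on the reverse enhanced suffix array, using the two periodicity lemmas above to pin down the exact shape of $T[\mu..j]$. The one structural fact I would lean on throughout is that $P_1=T[p_1..j]$ is a prefix of $T[\mu..j]$ (as already observed), so $|P_1|\le|T[\mu..j]|$ and hence $\ell\le\mu\le p_1$: only the leftward extension of $p_1$ is ever in question.

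First I would compute $p_1$, the starting position of the maximal element of $\suf[\ell,r]$, by a range-maximum query on $ISA$ (Lemma~\ref{lem:all}(\ref{it:suf})); if $p_1=\ell$ then $\mu=p_1$ and we are done. Otherwise I compute $p_2$, the maximal element of $\suf[\ell,p_1-1]$, by a second range-maximum query, set $P_2=T[p_2..j]$, and use a single $\lcp$ comparison (Lemma~\ref{lem:all}(\ref{it:lcp})) to test whether $P_1$ is a prefix of $P_2$ (equivalently $\lcp(P_1,P_2)=|P_1|$, noting $|P_2|>|P_1|$). If it is not, Lemma~\ref{lm:next_prefix} returns $\mu=p_1$ at once.

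In the remaining case $P_1$ is a prefix of $P_2$, and here the hypothesis $|T[\ell..j]|\le 2|T[r..j]|$ does the work. Since $\ell\le p_2$ and $p_1\le r$, it gives $|P_2|\le|T[\ell..j]|\le 2|T[r..j]|\le 2|P_1|$, which is exactly the regime of Lemma~\ref{lm:our_periodicity}; thus $\rho=T[p_2..p_1-1]$ is the shortest period of $P_2=\rho^s\rho'$ with $s\ge 2$, and both $|\rho|=p_1-p_2$ and $|\rho'|=|P_2|\bmod|\rho|$ are immediate. The same inequality yields $|T[\mu..j]|\le|T[\ell..j]|\le 2|P_1|$, so Lemma~\ref{lm:max} identifies $T[\mu..j]$ as the longest suffix of $T[\ell..j]$ of the form $\rho^t\rho'$.

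The only step that is more than a one-line lookup, and the part I expect to be the real obstacle, is extracting $\mu$ from this characterization in constant time. I would strip the fixed tail $\rho'$ by setting $e=j-|\rho'|$, so that a suffix $\rho^t\rho'$ of $T[\ell..j]$ corresponds bijectively to a suffix $\rho^t$ of $T[\ell..e]$; the desired $t$ is then simply the largest power of $\rho$ occurring as a suffix of $T[\ell..e]$. This is precisely the reverse-string form of Lemma~\ref{lem:pow} (the ``largest $\alpha$ with $x^\alpha$ a suffix of $y$'' query noted after its proof) instantiated with $x=\rho$ and $y=T[\ell..e]$, computable in $\Oh(1)$ time; choosing $y$ to begin at $\ell$ conveniently enforces $\mu\ge\ell$ with no separate clamping. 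I then output $\mu=e-t|\rho|+1$, and a consistency check confirms $t\ge s\ge 2$ because $\rho^s\rho'$ already lies inside $T[\ell..j]$.
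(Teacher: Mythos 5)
Your proof is correct and follows essentially the same route as the paper's: two range-maximum queries on $ISA$ for $p_1,p_2$, Lemma~\ref{lm:next_prefix} for the non-prefix case, and Lemmas~\ref{lm:our_periodicity} and~\ref{lm:max} combined with the suffix-power query (Lemma~\ref{lem:pow} applied to $T^R$) to locate the leftmost suffix of the form $\rho^t\rho'$. The only differences are cosmetic: you strip $\rho'$ and count $\rho$-powers ending at $e=j-|\rho'|$ within $T[\ell..e]$, whereas the paper counts them within $T[i..p_1-1]$ (the two computations yield the same position), and you handle the $p_1=\ell$ edge case explicitly.
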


\begin{proof}
Let $p_1$ be the starting position of the maximal suffix in $\suf [\ell, r]$, then $T[p_1..j]$ is a prefix of $T[\mu..j]$.
Let $p_2$ be the starting position of the maximal suffix in $\suf [\ell, p_1-1]$. $p_{1}$ and $p_{2}$ can be computed
$\Oh(1)$ time using two range maxima queries on $ISA$. Then we check if $T[p_{1}..j]$ is a prefix of $T[p_{2}..j]$.
If not, by Lemma~\ref{lm:next_prefix} $\mu=p_{1}$. Otherwise we set $\rho=T[p_2..p_1-1]$ and define $\nu$
to be  the largest integer such that $\rho^\nu\rho'$ is a suffix of $T[i..j]$ (or, equivalently, $\rho^\nu$ is a suffix of $T[i..p_1-1]$)
and set $p = p_{1} - \nu \cdot |\rho|$. Lemma~\ref{lem:pow} allows to compute $\nu$ in $\Oh(1)$ time using the enhanced suffix array of $T^{R}$.
The suffixes of $T[i..j]$ starting within $[\ell,r]$ are within multiplicative factor 2 from each other, so
Lemmas~\ref{lm:our_periodicity} and \ref{lm:max} imply $\mu = p$. 
\end{proof}

We apply the above lemma twice for the subintervals $[i,r']$ and $[\ell, r]$ found in the first step.
Finally, we compare the suffixes of $T[i..j]$ found in the second and third step in constant time, and return the larger one.
\subsection{Construction}
We start the construction with building the tables, which takes $o(n)$ time. In the main phase 
we scan positions of $T$ from the left to the right maintaining the list of active positions and computing the bit vectors.

We start with a lemma describing changes in the list of active suffixes upon a transition from $j$ to $j+1$.

\begin{lemma}
\label{lm:j-active-max}
	If the list of all $j$-active positions consists of $p_1 < p_2 < \ldots < p_k$, the list of $(j+1)$-active positions can be created by adding $j+1$, if $T[j+1] \succeq T[p_k]$ or $k=0$, and repeating the following procedure: if $p_\ell$ and $p_{\ell+1}$ are two neighbours on the current list, and $T[j+1] \neq T[j+1+p_{\ell}-p_{\ell+1}]$, remove $p_{\ell}$ or $p_{\ell+1}$ from the list, depending on whether $T[j+1] \succ T[j+1+p_{\ell}-p_{\ell+1}]$ or $T[j+1] \prec T[j+1+p_{\ell}-p_{\ell+1}]$, respectively.
\end{lemma}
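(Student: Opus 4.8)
The plan is to prove correctness of the update procedure by tracking what it means for a position to be $(j+1)$-active, and relating this to the $j$-active positions via a single-character comparison. Recall that $p$ is $j$-active iff $T[p..j]$ is the maximal suffix of some substring $T[i..j]$ ending at $j$, equivalently, no $p'\in\{p+1,\ldots,j\}$ satisfies $T[p'..j]\succ T[p..j]$. The central observation I would establish first is a \emph{monotonicity/nesting} fact: every $(j+1)$-active position is already $j$-active, with the sole possible exception of $j+1$ itself. Indeed, if $p\le j$ is $(j+1)$-active but not $j$-active, then some $p'\in\{p+1,\ldots,j\}$ has $T[p'..j]\succ T[p..j]$; since these two suffixes have different lengths and one is not a prefix of the other (the strictness comes from the position where they first differ, which lies within both), appending $T[j+1]$ preserves the strict inequality, giving $T[p'..j+1]\succ T[p..j+1]$ and contradicting $(j+1)$-activeness of $p$. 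This shows the new active list is a subset of the old list together with $\{j+1\}$, justifying that the update only \emph{adds} $j+1$ and \emph{removes} some old positions.

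Next I would justify the insertion criterion. The position $j+1$ is trivially $(j+1)$-active when $k=0$. Otherwise $j+1$ fails to be active exactly when some $p'\le j$ gives $T[p'..j+1]\succ T[j+1..j+1]=T[j+1]$, i.e. when $T[p']\succ T[j+1]$ for the largest currently active $p_k$ (which dominates, as $p_k$ is the maximal-suffix starting position among positions $\le j$). Hence $j+1$ is added precisely when $T[j+1]\succeq T[p_k]$, matching the statement. I would then handle the deletions by a pairwise comparison argument: for two neighbours $p_\ell<p_{\ell+1}$ on the list, both were $j$-active, so neither $T[p_\ell..j]$ nor $T[p_{\ell+1}..j]$ strictly dominates the other as a suffix of $T[p_\ell..j]$; the comparison of $T[p_\ell..j+1]$ and $T[p_{\ell+1}..j+1]$ is therefore decided by the first newly-exposed mismatch, which is exactly the comparison of $T[j+1]$ against $T[j+1+p_\ell-p_{\ell+1}]$ (the character aligned against position $j+1$ when $T[p_{\ell+1}..]$ is shifted to start at $p_\ell$). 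If these characters are equal no decision is forced yet; if $T[j+1]\succ T[j+1+p_\ell-p_{\ell+1}]$ then $T[p_{\ell+1}..j+1]\succ T[p_\ell..j+1]$, so $p_\ell$ can no longer be active and is removed, and symmetrically in the opposite case.

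The remaining task is to argue that iterating this local neighbour-comparison actually produces \emph{exactly} the set of $(j+1)$-active positions, not merely a superset or subset. Here I would show that a surviving position $p_\ell$ is genuinely $(j+1)$-active: if it survives against both its neighbours, then using Fact~\ref{fct:ratio}-style length control together with the fact that the relevant suffixes have already branched at or before the compared character, one can propagate the pairwise comparisons transitively to conclude that no $p'>p_\ell$ yields a strictly larger suffix of $T[p_\ell..j+1]$. The main obstacle I anticipate is precisely this transitivity step: a single comparison of neighbours only settles adjacent pairs, and one must rule out the possibility that a non-adjacent position dominates $p_\ell$ even though each adjacent comparison is inconclusive (equal characters). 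I would resolve this by exploiting that equal characters in the neighbour test force a periodic structure on the intervening block, so that the chain of equalities collapses to a comparison that the procedure does eventually examine once the periodic run is broken; this ensures every genuine dominator is detected and every survivor is truly active, completing the correctness proof.
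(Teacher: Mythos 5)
Your outline matches the paper's at the top level (first show that the $(j{+}1)$-active positions form a subset of the $j$-active ones plus $j{+}1$, then reduce each deletion to a single-character comparison), but two of your key steps fail. First, you claim that for neighbouring list elements $p_\ell<p_{\ell+1}$, $j$-activeness of both implies that ``neither strictly dominates the other,'' so that the comparison of $T[p_\ell..j{+}1]$ with $T[p_{\ell+1}..j{+}1]$ is decided by the newly exposed characters $T[j{+}1]$ and $T[j{+}1+p_\ell-p_{\ell+1}]$. That is false: activeness of $p_\ell$ only gives $T[p_{\ell+1}..j]\preceq T[p_\ell..j]$, which can hold via a strict mismatch rather than via the prefix relation. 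The paper's own example exhibits this: for $T[1..9]=\texttt{dcccababb}$ the positions $4$ and $6$ are neighbouring $8$-active positions, yet $T[6..8]=\texttt{bab}$ is not a prefix of $T[4..8]=\texttt{cabab}$; at step $9$ we have $T[9]=\texttt{b}\succ T[7]=\texttt{a}$ while nevertheless $T[4..9]\succ T[6..9]$, so your claimed equivalence between the character test and the suffix comparison breaks, and applying your rule to this pair would delete position $4$, which the example lists as $9$-active. The character test is sound only for pairs in which the shorter suffix is still a prefix of the longer one, and one must argue that only such pairs are ever examined (in the paper this is handled by scheduling each pair at the step $p_{\ell+1}+\lcp(T[p_\ell..],T[p_{\ell+1}..])$); your proof has no such restriction.

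Second, ``symmetrically in the opposite case'' is not a proof. The definition of $j$-active is one-sided: $p$ is deactivated only by a position $p'>p$ with $T[p'..j]\succ T[p..j]$. Hence $T[p_{\ell+1}..j{+}1]\prec T[p_\ell..j{+}1]$ says nothing by itself about the activeness of $p_{\ell+1}$, because $p_\ell$ lies to its \emph{left}; there is no symmetry to invoke, and justifying that branch requires a genuinely different argument. Your justification of the insertion criterion has the same problem: you test whether some $p'\le j$ dominates $j{+}1$, which again concerns the wrong side of the definition. The paper instead proves a prefix-closure property (if $p$ is $j$-active and $T[p'..j]$ is a prefix of $T[p..j]$ for $p'>p$, then $p'$ is $j$-active) and characterises every deletion as domination of $p$ by a \emph{larger} active position through such a prefix; that is what makes the single-character test meaningful. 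Finally, your closing paragraph, which you acknowledge is the crux, invokes ``Fact~\ref{fct:ratio}-style length control''---a fact about lengths of canonical substrings in the minimal-suffix structure with no bearing here---and leaves the exactness of the produced set as a sketch. As written, the proposal does not establish the lemma.
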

\begin{proof}
	First we prove that if a position $1 \le p \le j$ is not $j$-active, then it is not $(j+1)$-active either. Indeed, if $p$ is not $j$-active, then by the definition there is a position $p < p' \le j$ such that $T[p..j] \prec T[p'..j]$. Consequently, $T[p..j+1] = T[p..j]T[j+1] \prec T[p'..j]T[j+1] = T[p'..j+1]$ and $p$ is not $(j+1)$-active. Hence, the only possible candidates for $(j+1)$-active positions are $j$-active positions and a position $j+1$. 

	Secondly, note that if $1 \le p \le j$ is a $j$-active position and $T[p'..j]$ is a prefix of $T[p..j]$, then $p'$ is $j$-active too. Suppose the contrary. Then there exists a position $p''$, $p' < p'' < j$, such that $T[p'..j] \prec T[p''..j]$, and it follows that $T[p..j] = T[p'..j] T[p+(j-p'+1)..j] \prec T[p''..j]$, a contradiction.

	A $j$-active position $p$ is not $(j+1)$-active only if (1) $T[j+1] \succeq T[p]$ or (2) there exists $p < p' \le j$ such that $T[p'..j]$ is a prefix of $T[p..j]$, i.e., $p'$ is $j$-active, and $T[p'..j+1] \succ T[p..j]$, or, equivalently, $T[j+1] \succ T[j+1+p-p']$. Both of these cases will be detected by the deletion procedure.
\end{proof}

\begin{example}
If $T[1..9]=\texttt{dcccababb}$,  the $8$-active positions are $1, 2, 3, 4, 6, 8$, and the $9$-active positions are $1, 2, 3, 4, 8, 9$, i.e., we add $9$ to the list of $8$-active positions, and then remove $6$.
\end{example}

The list of active positions will be maintained in the following way. 
%TODO explain better
After transition from the list of $j$-active positions to the list of $(j+1)$-active positions new pairs of neighbouring positions appear. For each such pair $p_\ell, p_{\ell+1}$ we compute $L=\lcp(T[p_\ell..], T[p_{\ell+1}..])$ and hence the smallest $j' = p_\ell + L$ when one of them should be removed from the list, and add a pointer from $j'$ to the pair $p_\ell, p_{\ell+1}$. 

When we actually reach $j=j'$, we check if $p_\ell$ and $p_{\ell+1}$ are still neighbours. If they are, we remove the appropriate element from the current list. Otherwise we do nothing. From Lemma~\ref{lm:j-active-max} it follows that the two possible updates of the list under transition from $j$ to $(j+1)$ are adding $(j+1)$ or deleting some position from the list. This guarantees that the process of deletion described in Lemma~\ref{lm:j-active-max} and the process we have just described are actually equivalent.

Suppose that we already know the list of $j$-active positions, the bit vector describing the nice partition of $[1,j]$, and the number of $j$-active positions in each subinterval of the partition. First we update the list of $j$-active positions. When a position is deleted from the list, we use the pre-computed table to find the subinterval the position belongs to, and decrement the counter of active positions in this subinterval. If the counter becomes equal to zero, we set the corresponding bit of the bit vector to zero. Then we start updating the partition: first we append a new subinterval $[j+1,j+1]$ to the partition of $[1..j]$ and initialize the counter of active positions in this subinterval by one. If then we have three intervals of length $1$, we merge the two leftmost ones into one interval of length $2$, add their counters, update the bit vectors, and repeat, if necessary. All these operations will take $\Oh(1)$ amortized time.

\begin{theorem}
A string $T$ of length $n$ can be stored in an $\Oh(n)$-space structure that allows computing the maximal suffix of any substring of $T$ in $\Oh(1)$ time. The data structure can be constructed in $\Oh(n)$ time.
\end{theorem}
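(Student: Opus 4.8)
The plan is to verify the three claims of the theorem—space, query time, and construction time—by assembling the components developed above. The correctness of the query is the conceptual core, while the construction time is where the real bookkeeping lies.

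For the space bound I would observe that the enhanced suffix arrays of $T$ and of $T^R$ occupy $\Oh(n)$ words and support all the primitive queries (Lemmas~\ref{lem:all} and~\ref{lem:pow}) in $\Oh(1)$ time. For each of the $n$ positions $j$ we keep one bit vector of length $\Oh(\log n)$ indicating which subintervals of the nice partition of $[1,j]$ contain $j$-active positions; each such vector fits in $\Oh(1)$ machine words, contributing $\Oh(n)$ in total. The two precomputed tables are indexed by words of length $\floor{(\log n)/3}$, so they have $\Oh(2^{(\log n)/3}\log n)=\Oh(n^{1/3}\log n)=o(n)$ entries. Hence the total space is $\Oh(n)$.

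For the query I would first argue correctness. The start of the maximal suffix of $T[i..j]$ is the minimal $j$-active position in $[i,j]$. Let $[\ell',r']$ be the subinterval of the nice partition of $[1,j]$ containing $i$, and let $[\ell,r]$ be the leftmost subinterval to the right of it that contains a $j$-active position. If $[i,r']$ contains a $j$-active position then the minimal $j$-active position of $[i,j]$ lies in $[i,r']$; otherwise every $j$-active position of $[i,j]$ is to the right of $r'$, and the minimal one lies in $[\ell,r]$. Thus the sought starting position lies in $[i,r']\cup[\ell,r]$, so computing the maximal suffix of $T[i..j]$ beginning in each of these two intervals and returning the larger one yields the correct answer. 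For the running time, the first step—locating $[\ell',r']$ via the subinterval-locating table and then $[\ell,r]$ via a rank/select query on the bit vector—takes $\Oh(1)$. Both remaining steps are instances of the preceding lemma: the nice-partition properties guarantee $|T[\ell'..j]|\le 2|T[r'..j]|$ and $|T[\ell..j]|\le 2|T[r..j]|$ (the former bounding $|T[i..j]|$ since $i\ge \ell'$), so the lemma applies in each step and runs in $\Oh(1)$ time; the final lexicographic comparison is one $\lcp$ query. Altogether the query runs in $\Oh(1)$ time.

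The construction is the hard part, and I would present it as a left-to-right scan maintaining the list of active positions together with, for the current $j$, the nice partition, its bit vector, and a counter of active positions per subinterval. Building the tables costs $o(n)$. Passing from $j$ to $j+1$ performs the update of Lemma~\ref{lm:j-active-max}: I would append $j+1$ when warranted and realise the deletions lazily by precomputing, for each newly created neighbouring pair $p_\ell,p_{\ell+1}$, its expiry time $j'=p_\ell+\lcp(T[p_\ell..],T[p_{\ell+1}..])$ via a single $\lcp$ query and storing a pointer from $j'$ to the pair; when the scan reaches $j'$ we delete the appropriate endpoint only if the pair is still adjacent. The main obstacle is the amortized analysis showing that the whole process is $\Oh(n)$. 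The key accounting facts are that each transition inserts at most one position, so there are at most $n$ insertions and hence at most $n$ deletions over the entire scan; each insertion or deletion creates only $\Oh(1)$ new adjacent pairs, so only $\Oh(n)$ expiry pointers are ever created, each processed with one $\lcp$ query; on a deletion the subinterval-locating table finds the affected subinterval in $\Oh(1)$, letting us decrement its counter and clear a bit when it reaches zero; and the merges of equal-length blocks in the nice partition behave like increments of a binary counter, amortizing to $\Oh(1)$ per step. Summing these contributions gives $\Oh(n)$ total construction time, which completes the proof.
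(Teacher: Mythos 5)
Your proposal is correct and follows essentially the same route as the paper: the same nice partitions with per-position bit vectors and lookup tables, the same two-subinterval query reduction to the constant-time lemma, and the same left-to-right scan with lazily expiring neighbour pairs for the construction. You make explicit a few points the paper leaves implicit (the $o(n)$ table size, the counting argument behind the amortized $\Oh(n)$ bound), but the underlying argument is identical.
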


\section{General Substring Suffix Selection}
	\label{sec:k-suffix}
	In the previous sections we considered the problems of computing the minimal and the maximal suffixes of a substring. 
Here we develop a data structure for the general case of the suffix selection problem.
Recall that the query, given a substring $T[i..j]$ and an integer $k$, returns the (length of) the $k$-th smallest suffix of $T[i..j]$.

For strings $S,T$
 we define $\nlarger(T,S)$ as the number of suffixes of $T$ not larger than $S$.
 Our data structure is based on the following fact.
\begin{fact}
Let $s_k$ be the $k$-th smallest suffix of $T[i..j]$ and let $S$ be the minimal suffix of $T$
such that $k'=\nlarger(T[i..j], S)\ge k$. Then $s_k$ is a prefix of $S$, and
there are exactly $k'-k$ longer prefixes of $S$ which are simultaneously suffixes of $T[i..j]$.
\end{fact}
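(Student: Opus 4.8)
The plan is to first reformulate the definition of $S$ in a way that removes the dependence on $\nlarger$. Since the $j-i+1$ suffixes of $T[i..j]$ are pairwise distinct, listing them in increasing order $s_1 \prec s_2 \prec \cdots$, for any string $X$ the quantity $\nlarger(T[i..j],X)$ is simply the number of indices $l$ with $s_l \preceq X$; hence $\nlarger(T[i..j],X) \ge k$ holds exactly when $X \succeq s_k$. Consequently $S$ is nothing but the lexicographically smallest suffix of $T$ satisfying $S \succeq s_k$. This characterization is the engine behind both parts of the statement, and with it in hand I would invoke the minimality of $S$ repeatedly.

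For the first claim, write $s_k = T[p..j]$. Then the full suffix $T[p..]$ of $T$ has $s_k$ as a prefix, so $T[p..] \succeq s_k$, and by minimality $S \preceq T[p..]$; together with $s_k \preceq S$ this sandwiches $s_k \preceq S \preceq T[p..]$. To conclude that $s_k$ is a prefix of $S$, I would argue by contradiction: if $s_k$ were not a prefix of $S$, then, as $s_k \preceq S$, the first mismatch between them would occur at some position $r \le |s_k|$ with the smaller symbol on the side of $s_k$. But $s_k$ is a prefix of $T[p..]$, so the very same position would witness $T[p..] \prec S$, contradicting $S \preceq T[p..]$.

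For the second claim, observe that the suffixes $w$ of $T[i..j]$ with $s_k \prec w \preceq S$ are precisely $s_{k+1},\dots,s_{k'}$, so there are exactly $k'-k$ of them; it therefore suffices to prove that this set coincides with the set of prefixes of $S$ that are longer than $s_k$ and are suffixes of $T[i..j]$. One inclusion is immediate: any prefix of $S$ longer than $s_k$ has $s_k$ (itself a prefix of $S$ by the first claim) as a proper prefix, hence lies strictly between $s_k$ and $S$. The reverse inclusion — taking $w = T[q..j]$ with $s_k \prec w \preceq S$ and showing $w$ is a prefix of $S$ — is the crux, and this is exactly where the minimality of $S$ is indispensable. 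Assuming $w$ is not a prefix of $S$, let $r$ be the first position where $w$ and $S$ differ, so $w[r] \prec S[r]$. If $r \le |s_k|$, then $w$ and $s_k$ agree before $r$ while $w[r] \prec S[r] = s_k[r]$, forcing $w \prec s_k$, a contradiction. If $r > |s_k|$, then $s_k$ is a prefix of the full suffix $T[q..]$ (which extends $w$), so $T[q..] \succeq s_k$, while the mismatch at $r$ gives $T[q..] \prec S$; this produces a suffix of $T$ strictly smaller than $S$ yet still $\succeq s_k$, contradicting minimality. Hence $w$ must be a prefix of $S$, and being $\succ s_k$ it is necessarily longer, which finishes the proof. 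The main obstacle is precisely this last step: excluding non-prefix suffixes from the interval $(s_k, S]$, which is exactly the role played by choosing $S$ minimal.
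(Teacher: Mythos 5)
Your proof is correct and follows essentially the same route as the paper's: both arguments hinge on the sandwich $s \preceq S \preceq T[q..]$ (obtained from the minimality of $S$ and the fact that $s$ is a prefix of $T[q..]$) to force each of $s_k, s_{k+1}, \ldots, s_{k'}$ to be a prefix of $S$. Your write-up is merely more explicit, spelling out the reformulation of the minimality of $S$ as ``smallest suffix $\succeq s_k$'' and the mismatch case analysis that the paper compresses into ``a similar reasoning shows\dots''.
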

\begin{proof}
Let $s_k=T[m_k..j]$.
Observe that $S\preceq T[m_k..]$ and $s_k \preceq S$, so
 $T[m_k..j] \preceq S\preceq T[m_k..]$ which means that $s_k=T[m_k..j]$ is indeed a prefix of $S$.
A similar reasoning shows that $s_{\ell}$ is a prefix of $S$ for each $\ell \in\{k+1,\ldots,k'\}$.
Conversely, any suffix of $T[i..j]$ larger than $s_k$ but not than $S$ must be a prefix of $S$,
so $s_{k+1},\ldots,s_{k'}$ are exactly the $k'-k$ longer prefixes of $S$ simultaneously being suffixes of $T[i..j]$.
\end{proof}

The query algorithm performs a binary search to determine $S$, calling a subroutine
to compute $\nlarger(T[i..j],S)$.
Then for $q=\nlarger(T[i..j],S)-k+1$  it finds the $q$-th largest prefix of $S$ simultaneously being a suffix
of $T[i..j]$. 

The second step is performed using Prefix-Suffix Queries, defined as follows.
For given substrings $S$, $S'$ of $T$ we are supposed to find (the lengths of) all prefixes
of $S$ which are simultaneously suffixes of $S'$. The lengths are reported as a sequence $A_1,\ldots,A_\ell$ of sets such 
that $\ell=\Oh(\log n)$, for each $i$ values in $A_i$ form an arithmetic progression and each element of $A_{i+1}$
is larger than each element of $A_i$.

\begin{lemma}\label{lem:prefsuf}
For any $\eps>0$ Prefix-Suffix Queries can be answered in $\Oh(\log^{1+\eps}n)$ time
by a data structure of size $\Oh(n)$.
\end{lemma}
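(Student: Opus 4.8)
The plan is to group the candidate overlap lengths $d\in[1,\min(|S|,|S'|)]$ into the $\Oh(\log n)$ dyadic ranges $[2^k,2^{k+1})$ and to show that the lengths answering the query inside a single range form one arithmetic progression. Since the ranges are linearly ordered, reading them from the smallest to the largest and concatenating the progressions produced for $k=0,1,\dots$ yields exactly the required sequence $A_1,\dots,A_\ell$ with $\ell=\Oh(\log n)$: each $A_i$ is an arithmetic progression, and every element of $A_{i+1}$ exceeds every element of $A_i$. Thus it suffices, for each fixed $k$, to output the progression of good lengths in $[2^k,2^{k+1})$, and the total cost will be $\Oh(\log n)$ times the per-range cost, so I should aim at $\Oh(\log^{\eps} n)$ per range.

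For the combinatorial core, fix $L=2^k$ and call a length $d$ \emph{good} if $S[1..d]$ equals the suffix of $S'$ of length $d$. Let $d_m$ be the largest good length in $[L,2L)$ and put $U=S[1..d_m]$, which is simultaneously a prefix of $S$ and the suffix of $S'$ of length $d_m$. For $d\in[L,d_m]$ the length $d$ is good if and only if $d_m-d$ is a period of $U$: indeed $S[1..d]$ is a suffix of $S'$ iff it is a suffix of $U$, i.e.\ iff $U[1..d]=U[d_m-d+1..d_m]$, which says precisely that $d_m-d$ is a period of $U$. Every such period satisfies $d_m-d\le d_m-L<\tfrac12 d_m=\tfrac12|U|$, because $d_m<2L$; hence any two of these periods sum to less than $|U|$, and by the periodicity lemma all of them are multiples of the smallest one, $\pi$. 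Conversely every multiple of $\pi$ not exceeding $\tfrac12|U|$ is again a period of $U$, so the good lengths in the range are exactly $d_m,d_m-\pi,d_m-2\pi,\dots$ truncated at $L$, an arithmetic progression with difference $\pi$.

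It then remains, for each $k$, to compute $d_m$ and $\pi$ quickly (ranges in which $S$ or $S'$ is too short contribute nothing and are skipped). Finding $d_m$ amounts to locating the longest prefix of $S$ of length in $[L,2L)$ that occurs as a suffix of $S'$. I would reduce this to reporting the occurrences of the length-$L$ prefix $P=S[1..L]$ inside the window of $S'$ consisting of the $<2L$ positions that could start such an occurrence; since this window is shorter than $2|P|$, these occurrences pairwise overlap and, by the periodicity lemma, form a single arithmetic progression whose difference is the shortest period of $P$. For this I would equip $T$ with an auxiliary $\Oh(n)$-space structure answering such internal occurrence queries in $\Oh(\log^{\eps}n)$ time. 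From the returned progression the smallest starting position gives the largest candidate length, and a constant number of $\lcp$ queries together with the largest-power query of Lemma~\ref{lem:pow} (applied to the period block, using the enhanced suffix arrays of $T$ and $T^R$) verifies the matching tail $S[L+1..d]$ and pins down where the shared periodicity with $S$ breaks; this yields $d_m$, and the difference $\pi$ of $U=S[1..d_m]$ is read off as the gap between the two largest good lengths. Each range is thus handled with $\Oh(1)$ enhanced-suffix-array queries and $\Oh(1)$ occurrence queries, for a total of $\Oh(\log^{1+\eps}n)$ time and $\Oh(n)$ space.

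The main obstacle is the algorithmic step rather than the combinatorics: realizing the internal occurrence queries in $\Oh(\log^{\eps}n)$ time within $\Oh(n)$ space, and correctly combining the periodic run of $P$ inside $S'$ with the tail-matching condition against $S$, so that the good sub-progression is extracted by $\Oh(1)$ constant-time primitives. In particular one must argue that the period $\pi$ of $U$ is controlled by the period of $P$ and that no good length is lost when the verification against $S$ truncates the periodic run; this is exactly guaranteed by the combinatorial lemma above.
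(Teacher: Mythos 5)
Your combinatorial core is sound: grouping candidate lengths into dyadic ranges $[2^k,2^{k+1})$, translating ``good'' lengths $d\le d_m$ into periods $d_m-d<\frac12|U|$ of $U=S[1..d_m]$, and invoking the periodicity lemma to conclude that these form a single arithmetic progression is exactly the right structure, and the ordering/concatenation argument for the output format is fine. However, this is a genuinely different route from the paper, which does not open up this machinery at all: it observes that the lengths of prefixes of $S=T[i..j]$ that are suffixes of $S'=T[i'..j']$ are precisely the lengths of borders of the string $T[i..]T[..j']$, viewed as a substring of $T^2=TT$, that do not exceed $\min(|S|,|S'|)$. It therefore simply stores the border-reporting data structure of Kociumaka et al.~\cite{FactorPeriodicity2012} for $T^2$ and filters the reported progressions by length. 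Your approach, by contrast, essentially re-derives the internals of that data structure.

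The genuine gap is in the algorithmic half of your argument. Everything bottoms out in the ``auxiliary $\Oh(n)$-space structure answering internal occurrence queries in $\Oh(\log^{\eps}n)$ time'' (reporting the occurrences of $P=S[1..L]$ in a window of $S'$ of length $<2L$ as one arithmetic progression), which you assume rather than construct; achieving this within $\Oh(n)$ space is precisely the hard technical content of~\cite{FactorPeriodicity2012} and cannot be obtained from the enhanced suffix array primitives of Lemmas~\ref{lem:all} and~\ref{lem:pow} alone. Two further steps are also under-specified: extracting the common difference $\pi$, which is the shortest period of $U$ and is not a constant-time primitive (you cannot ``read it off as the gap between the two largest good lengths'' without already having computed a second good length), and intersecting the progression of occurrences of $P$ with the tail-matching condition against $S$, where one must argue via where the period of $P$ breaks in $S$ versus in $S'$ that the surviving lengths remain a contiguous sub-progression. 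All of these are fixable with the standard run/periodicity arguments, but as written the proof delegates its main difficulty to an unproven primitive, whereas the paper discharges it with a citation and a two-line reduction.
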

\begin{proof}
In~\cite{FactorPeriodicity2012} Kociumaka et al. considered similar queries, where (the lengths of) all borders
of a given substring were reported. Here it suffices to store their data structure
for $T^2=TT$. Given a query with $S=T[i..j]$ and $S'=T[i'..j']$ it is enough find borders of $T[i..]T[..j']$
and filter out those longer than $\min(|S|,|S'|)$.
\end{proof}

Now it suffices to show how $\nlarger(T[i..j],S)$ can be efficiently computed.

\begin{lemma}
For any $\eps>0$ and a string $T$ of length $n$, there is an $\Oh(n)$-size data structure that given integers $i,j,\ell$ computes $\nlarger(T[i..j],T[\ell..])$
in $\Oh(\log^{1+\eps} n)$ time.
\end{lemma}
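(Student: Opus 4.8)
The plan is to reduce $\nlarger(T[i..j],T[\ell..])$ to one orthogonal range-counting query and one Prefix-Suffix Query, and then to resolve the latter using the periodic structure of its output. First I would establish a clean characterization of which suffixes of $T[i..j]$ are not larger than $S=T[\ell..]$. For $m\in[i,j]$ write $L=j-m+1$, so the candidate suffix is $T[m..j]$. I claim $T[m..j]\preceq T[\ell..]$ holds if and only if $ISA[m]<ISA[\ell]$ (equivalently $T[m..]\prec T[\ell..]$) or $T[m..j]$ is a prefix of $T[\ell..]$: if $T[m..]\prec T[\ell..]$ then truncation only makes the suffix smaller, so $T[m..j]\preceq T[m..]\prec T[\ell..]$; if $T[m..j]$ is a prefix of $T[\ell..]$ it is immediate; and otherwise the first mismatch of $T[m..]$ and $T[\ell..]$ lies inside $T[m..j]$ and certifies $T[m..j]\succ T[\ell..]$. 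Since the rank $ISA[m]=ISA[\ell]$ only for $m=\ell$, and $T[\ell..j]$ is then a prefix of $T[\ell..]$, this yields the disjoint decomposition
\begin{align*}
\nlarger(T[i..j],T[\ell..]) = {}&\#\{m\in[i,j] : ISA[m]\le ISA[\ell]\}\\
&+\#\{m\in[i,j] : ISA[m]>ISA[\ell]\ \text{and}\ T[m..j]\ \text{is a prefix of}\ T[\ell..]\},
\end{align*}
with $m=\ell$ (if present) counted exactly once, in the first term.

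The first term is a two-dimensional dominance count over the point set $\{(m,ISA[m])\}$, so I would store a linear-space range-counting structure (a wavelet tree over $ISA$) and answer the rectangle $[i,j]\times[1,ISA[\ell]]$ in $\Oh(\log n)$ time. For the second term, observe that ``$T[m..j]$ is a prefix of $T[\ell..]$'' says precisely that $T[m..j]$ is simultaneously a suffix of $T[i..j]$ and a prefix of $T[\ell..]$. Hence a single Prefix-Suffix Query (Lemma~\ref{lem:prefsuf}) enumerates all admissible lengths $L$ as $\Oh(\log n)$ arithmetic progressions in $\Oh(\log^{1+\eps}n)$ time. Once the common prefix $T[m..j]=T[\ell..\ell+L-1]$ is fixed, the comparison of $T[m..]$ and $T[\ell..]$ is decided by their tails, so $ISA[m]>ISA[\ell]$ is equivalent to $ISA[\ell+L]<ISA[j+1]$; thus for every progression I must count the lengths $L$ for which $T[\ell+L..]\prec W$, where $W=T[j+1..]$.

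The crux, and the step I expect to be the main obstacle, is performing this count in $\Oh(1)$ per progression so that the whole second term stays within $\Oh(\log n)$. The key observation is that each progression encodes a run: if $L_{\max}$ is its largest length, the prefix of $T[\ell..]$ of length $L_{\max}$ has period $d$ equal to the common difference, so all positions $\ell+L$ lie in a $d$-periodic region, and the corresponding suffixes can be written uniformly as $\pi^{b}Z$, where $\pi=T[\ell..\ell+d-1]$ is the period, $Z=T[\ell+L_{\max}..]$ is a common tail, and $b=(L_{\max}-L)/d$ ranges over a contiguous integer interval. Comparing $\pi^{b}Z$ with $W$ then reduces to measuring how far each agrees with the infinite power of $\pi$: let $e_Z$ (resp. $e_W$) be the length of the longest prefix of $Z$ (resp. $W$) that continues the periodicity of $\pi$, both computable in $\Oh(1)$ via Lemma~\ref{lem:pow} and one further $\lcp$ query. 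Since $\pi^{b}Z$ continues the periodicity for exactly $bd+e_Z$ characters while $W$ does so for $e_W$, whichever breaks first is compared against the still-periodic other through its single breaking character. Consequently the truth value of $\pi^{b}Z\prec W$ is piecewise constant in $b$ with a unique breakpoint at $b=(e_W-e_Z)/d$: below it the sign is governed by the character breaking the periodicity of $Z$, above it by that of $W$, and the integer breakpoint itself is settled by one direct character comparison. This reduces the per-progression count to $\Oh(1)$ arithmetic, so the second term costs $\Oh(\log n)$ overall.

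Summing the two terms, the query time is $\Oh(\log^{1+\eps}n)$, dominated by the Prefix-Suffix Query. The preprocessing consists of the enhanced suffix arrays of $T$ and $T^{R}$ (for $ISA$, $\lcp$, and the maximal-power queries of Lemmas~\ref{lem:all} and~\ref{lem:pow}), the structure of Lemma~\ref{lem:prefsuf}, and the wavelet tree over $ISA$, all of size $\Oh(n)$.
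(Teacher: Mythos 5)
Your proof is correct and follows essentially the same route as the paper: the same split of $\nlarger(T[i..j],T[\ell..])$ into a 2D range-counting term over $\{(m,ISA[m])\}$ plus a count of suffixes that become prefixes of $T[\ell..]$ after trimming at $j$, with the latter handled by one Prefix-Suffix Query and an $\Oh(1)$ per-progression count that exploits the $d$-periodic run structure exactly as the paper does with its $\rho^{\nu'-s}x$ versus $\rho^{\nu''}y$ case analysis. The only slip is that your period word $\pi$ should be the length-$d$ factor ending at position $\ell+L_{\max}-1$ (a rotation of $T[\ell..\ell+d-1]$, since the lengths $L$ in a progression need not be multiples of $d$), a detail that does not affect the argument.
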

\begin{proof}
Note that if instead of counting suffixes of $T[i..j]$ which are not larger than $T[\ell..]$ we were to count suffixes 
in $\suf[i,j]$ which are not larger than $T[\ell..]$, our problem could be immediately
reduced to 2D orthogonal range counting on a set $Q=\{(m, ISA[m]):1\le m\le |T|\}$, the query rectangle would be $[i,j]\times [1,ISA[\ell]]$.
While our queries require more attention, we still use the data structure of~\cite{DBLP:conf/isaac/JaJaMS04},
which stores $Q$ using $\Oh(n)$ space and answers range counting queries in $\Oh(\frac{\log n}{\log \log n})$ time.

Observe that the number of suffixes of $T[i..j]$ smaller than $T[\ell..]$ is equal to the number of suffixes in $\suf[i, j]$ smaller than $T[\ell..]$ plus the number of suffixes in $\suf[i,j]$ which are bigger than $T[\ell..]$, but trimmed at the position $j$ become smaller than $T[\ell..]$ (i.e. trimmed suffixes become prefixes of $T[\ell..]$). The first term is determined using range counting as described above, while
computing the second one is a bit trickier. 
We use Prefix-Suffix Queries to find all suffixes of $T[i..j]$ which are simultaneously prefixes of $T[\ell..]$, and for each
arithmetic progression reported we count suffixes that are bigger than $T[\ell..]$.

Consider one of the progressions. 
Let $r < r+d < \ldots < r+\nu d$ be the starting positions as suffixes of $T[i..j]$. 
Then all substrings $T[r + (s-1)d..r+sd-1]$ are equal to $\rho = T[\ell..\ell+d-1]$. 
This means that $T[r+sd..]$, $s=0,1,\ldots, \nu$, can be represented as $\rho^{\nu'-s} x$, where $\nu'\ge \nu$ and $x$ is a fixed string which does not start with $\rho$. 
Let $T[\ell..] = \rho^{\nu''} y$, where $\nu''$ is the maximal exponent possible.
If $\nu'' < \nu'-s$, then the order between $T[\ell..]$ and $T[r+sd..]$ coincides with the order between $x$ and $\rho$, if $\nu'' = \nu'-s$, then the order coincides with the order between $x$ and $y$, and in the case $\nu'' > \nu'-s$ the order is defined by the order between $\rho$ and $y$. 
It follows that to compute the number of suffixes $T[r_s..] \in \suf[i,j]$ bigger than $T[\ell..]$ we are to determine $\nu'$ and $\nu''$, and to compare at most three pairs of substrings of T. This can be done in constant time using the enhanced suffix array.
\end{proof}

\begin{theorem}
 	For any $\eps>0$ there is a data structure of size $\Oh(n)$, which can answer substring suffix selection queries in $\Oh(\log^{2+\eps}n)$ time.
\end{theorem}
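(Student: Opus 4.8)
The plan is to assemble the data structure from the three components already established: the enhanced suffix array, the $\nlarger$ subroutine of the preceding lemma, and the Prefix-Suffix Query structure of Lemma~\ref{lem:prefsuf}. Each of these occupies $\Oh(n)$ space, so the total space is $\Oh(n)$, which settles the size bound immediately. The query procedure follows exactly the two-step scheme sketched after the Fact: first locate the string $S$, the minimal suffix of $T$ with $\nlarger(T[i..j], S) \ge k$; then, writing $k' = \nlarger(T[i..j], S)$ and $q = k'-k+1$, extract $s_k$ as the $q$-th largest prefix of $S$ that is simultaneously a suffix of $T[i..j]$.

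For the first step I would binary search over the $n$ suffixes of $T$ arranged in lexicographic order, that is, over the rank $r \in \{1,\ldots,n\}$ with candidate $S = T[SA[r]..]$. The search is valid because $\nlarger(T[i..j], S)$ is non-decreasing as $S$ grows lexicographically: if $S_1 \preceq S_2$, then every suffix of $T[i..j]$ not larger than $S_1$ is also not larger than $S_2$, so $\nlarger(T[i..j], S_1) \le \nlarger(T[i..j], S_2)$. Thus I seek the smallest rank $r$ for which the count reaches $k$. Each of the $\Oh(\log n)$ probes invokes the subroutine computing $\nlarger(T[i..j], T[SA[r]..])$ in $\Oh(\log^{1+\eps} n)$ time, so the whole search costs $\Oh(\log^{2+\eps} n)$.

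For the second step I would issue a Prefix-Suffix Query with the pair $S$ and $S' = T[i..j]$, obtaining the lengths of all prefixes of $S$ that are suffixes of $T[i..j]$ as a sequence $A_1,\ldots,A_\ell$ of $\ell = \Oh(\log n)$ arithmetic progressions with strictly increasing values, in $\Oh(\log^{1+\eps} n)$ time. By the Fact, $s_k$ is itself a prefix of $S$ and exactly $k'-k$ of these prefix-suffixes are strictly longer, hence lexicographically larger, than $s_k$; therefore $s_k$ is precisely the $q$-th largest such prefix-suffix. Any prefix-suffixes shorter than $s_k$ are lexicographically smaller and so occupy positions beyond $q$ when counting from the top, which is why they do not disturb the selection. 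Scanning the progressions from $A_\ell$ downwards, each of known cardinality, I would locate the block containing the $q$-th largest element and compute it by the arithmetic formula of that progression in $\Oh(\log n)$ time; its length is exactly the length of $s_k$ returned by the query.

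Summing the three contributions yields $\Oh(\log^{2+\eps} n)$ per query, matching the claim. The step I expect to be the crux is the first one: it is the only place where the $\log n$ factor of the binary search multiplies the $\Oh(\log^{1+\eps} n)$ cost of the $\nlarger$ subroutine to produce the stated bound, and its correctness rests on both the monotonicity argument above and the combinatorial guarantee of the Fact that the located $S$ indeed pins down the target $s_k$.
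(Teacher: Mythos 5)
Your proposal is correct and follows essentially the same route as the paper: a binary search over suffix-array ranks for $S$ using the $\nlarger$ subroutine ($\Oh(\log n)\cdot\Oh(\log^{1+\eps}n)$), followed by a single Prefix-Suffix Query and selection of the $q$-th largest reported length from the $\Oh(\log n)$ arithmetic progressions. The monotonicity observation justifying the binary search and the use of the Fact to pin down $s_k$ as the $q$-th largest prefix-suffix are exactly the intended argument.
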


\section{Conclusion}
In this paper we studied the substring suffix selection problem. We first revisited two special cases of the problem. For the problem of computing the minimal suffix of a substring we proposed a series of linear-space data structures with $\Oh(\tau)$ query time and $\Oh(n \log n / \tau)$ construction time, where $1 \le \tau \le \log n$ is an arbitrary parameter. We then showed that these data structures can be used to compute the Lyndon decomposition~\cite{chen1958free} of a substring of $T$ in $\Oh(k \tau)$ time, where $k$ is the number of distinct factors in the decomposition. For the maximal suffix problem we gave a linear-space data structure with constant query time and linear construction time. Both results improve upon the results of~\cite{Minmaxsuf}. Secondly, we studied the general case of the problem and showed that a string of length $n$ can be preprocessed into a linear-space data structure that allows to compute the $k$-th suffix of any substring of the string in $\Oh(\log^{2+\varepsilon} n)$ time.

Some problems remain open. First, we gave an optimal data structure for the maximal suffix problem, but not for the minimal suffix problem. Can constant query time, linear space and linear construction time be achieved? Secondly, the query time we gave for the general case of the problem is much worse than query times for the minimal or the maximal suffix problems. Can it be improved without changing the space bound?

\bibliographystyle{plain}
\bibliography{main}
\end{document}